\newtheorem{theorem}{Theorem}[section]
\newtheorem{proposition}[theorem]{Proposition}
\newtheorem{lemma}[theorem]{Lemma}
\newtheorem{corollary}[theorem]{Corollary}
\newtheorem{remark}[theorem]{Remark}
\newtheorem{definition}[theorem]{Definition}
\newtheorem{example}[theorem]{Example}
\newtheorem{assumption}[theorem]{Assumption}
\newcommand{\ud}{\mathrm{d}}
\newcommand{\eqdefr}{=\mathrel{\mathop:}}
\newcommand{\eqdefl}{\mathrel{\mathop:}=}
\newcommand{\Q}{\mathbb{Q}}
\newcommand{\FF}{\mathcal{F}}
\newcommand{\f}{\mathbb{F}}
\newcommand{\A}{\mathcal{A}}
\newcommand{\B}{\mathcal{B}}
\newcommand{\M}{\mathcal{M}}
\newcommand{\K}{\mathcal{K}}
\newcommand{\C}{\mathcal{C}}
\newcommand{\s}{\hat{s}}
\newcommand{\tha}{\hat{\theta}}
\newcommand{\LL}{\mathcal{L}}
\newcommand{\R}{\mathbb{R}}
\newcommand{\1}{\textbf{1}}
\newcommand{\N}{\mathbb{N}}
\newcommand{\PP}{\mathbb{P}}
\newcommand{\Cov}{\mathbf{Cov}}
\newcommand{\Var}{\mathbf{Var}}
\newcommand{\Corr}{\mathbf{Corr}}
\newcommand{\ackname}{Acknowledgements}
\newcommand{\e}{\varepsilon}
\DeclareMathOperator{\inter}{int}
\DeclareMathOperator{\conv}{conv}
\DeclareMathOperator{\supp}{supp}
\DeclareMathOperator{\relint}{ri}
\DeclareMathOperator{\interior}{int}
\newcommand{\F}{\mathscr{F}}
\author{\normalsize{$\text{Mikl\'os R\'asonyi}^1$}
\normalsize{$\text{Hasanjan Sayit}^2$}  \\
\footnotesize{$^1\text{Rényi Institute, Budapest, Hungary}$}\\
\footnotesize{$^2\text{Xi'Jiao Liverpool University, Suzhou, China}$}}
\date{September 20, 2021}
\begin{document}
\title{Exponential utility maximization in small/large financial markets}
\date{\today}

\maketitle



\abstract{This note studies optimal portfolio selection problems under the exponential utility function $U(x)=-e^{-ax}, a>0,$ in a  market with multiple risky assets with return vector given by normal mean-varaince mixture (NMVM) models (models given by (\ref{one}) in the introduction). An agent with initial capital $W_0$ invests her wealth on a risk-free asset with return $r_f$ and  multiple risky assets with NMVM return vector. In this note, under the assumption of linear independence of $\gamma$ and $\mu-\1r_f$,  it is shown that the agents optimal portfolio is given explicitly by
\[
x^{\star}=\frac{1}{aW_0}[\Sigma^{-1}\gamma +\frac{1}{EZ}\Sigma^{-1}(\mu-\1r_f)].
\]
This note also studies large financial markets based on normal mean-variance mixture models. The main result in this part of the note shows  that the optimal exponential utilities based on small markets converge to the optimal exponential utility in the large financial market. This shows, in particular, that to reach the best utility level investors need to diversify their investment to include infinitely many assets into their portfolio and with portfolios based on only finitely many assets they never be able to reach the optimum level of utility.
}

\vspace{0.1in}
\textbf{Keywords:} Expected utility; Mean-variance mixtures; Hara utility functions; Large financial markets; Martingale measures.
\vspace{0.1in}

\textbf{JEL Classification:} G11
\vspace{0.1in}

\section{Introduction}
We consider a frictionless financial market with $d+1$ assets. We assume the first asset is a risk-free asset with risk-free interest rate $r_f$ and the remaining $d$ assets are risky assets with returns modelled by an $d-$dimensional random vector $X$. In this note, we assume that $X$ follows normal mean-variance mixture (NMVM) distribution as follows
\begin{equation}\label{one}
X\overset{d}{=}\mu+\gamma Z+\sqrt{Z}AN,
\end{equation}
where $\mu \in \R^d$ is location parameter, $\gamma \in \R^d$ controls the skewness, $Z\sim G$ is a non-negative random variable with distribution function $G$,  $A\in \R^{d\times d}$ is a symmetric and positive definite  $d\times d$ matrix of real numbers, $N\sim N(0, I)$ is a $d-$dimensional  Gaussian random vector with identity co-variance matrix $I$ in $\R^d\times \R^d$,  and $N$ is independent from the mixing distribution $Z$. 

In this paper we use the following notations. For any vectors $x=(x_1, x_2, \cdots, x_d)^T$ and $y=(y_1, y_2, \cdots, y_d)^T$ in $\R^d$, where the superscript $T$ stands for the transpose of a vector, $<x, y>=x^Ty=\sum_{i=1}^dx_iy_i$ denotes the scalar product of the vectors $x$ and $y$, and $|x|=\sqrt{\sum_{i=1}^dx_i^2}$ denotes the Euclidean norm of the vector $x$.  We sometimes use the short hand notation $X\sim N(\mu+\gamma z, z\Sigma ) \circ G$ for (\ref{one}). $\R$ denotes the set of real numbers and $\R_+=[0, +\infty)$ denotes the set of non-negative real-numbers. Following the same notations of \cite{Mark-Yor-GGC}, $\mathcal{J}$ denotes the family of infinitely divisible random variables on $\R_+$, $\mathcal{S}$ denotes the set of self-decomposable random variables on $\R_+$, and $\mathcal{G}$ denotes the class of generalized gamma convolutions (GGCs) on $\R_+$ that will be introduced later. The Laplace transformation of any distribution $G$ is denoted by $\mathcal{L}_G(s)=\int e^{-sy}G(dy)$. A gamma random variable with density function $f(x)=\frac{1}{\Gamma(\alpha)\beta^{\alpha}}x^{\alpha-1}e^{-x/\beta}$
is denoted by $G=G(\alpha, \beta)$.

A prominent example of the NMVM models is generalized hyperbolic (GH) distributions, where the mixing distribution $Z$ follows generalized inverse Gaussian (GIG) distribution denoted as $GIG(\lambda, a, b)$. The probability density function of a GIG distribution, denoted by $f_{GIG}(\lambda, a, b)$, takes the following form
\begin{equation}\label{gig}
 f_{GIG}(x; \lambda, a, b)= (\frac{b}{a})^{\lambda}\frac{1}{K_{\lambda}(ab)}x^{\lambda-1}e^{-\frac{1}{2}(a^2x^{-1}+b^2x)}1_{(0, +\infty)}(x),  
\end{equation}
where $K_{\lambda}(x)$ denotes the modified Bessel function of third kind with index $\lambda$ and the allowed parameter ranges for $\lambda, a, b$ in (\ref{gig}) are (i) $a\geq 0, b>0$ if $\lambda>0$, (ii)
$a>0, b\geq 0$ if $\lambda<0$, (iii) $a>0, b>0$ if $\lambda=0$. Here the case $a=0$ in (i) or the case $b=0$ in (ii)  above need to be understood in limiting cases of (\ref{gig}) and in these special cases we have
\begin{equation}\label{gainga}
\begin{split}
 f_{GIG}(x; \lambda, 0, b)&=(\frac{b^2}{2})^{\lambda}\frac{x^{\lambda-1}}{\Gamma (\lambda)}e^{-\frac{b^2}{2}x}1_{(0, +\infty)}(x), \; \; \; \lambda>0, \\
  f_{GIG}(x; \lambda, a, 0)&=(\frac{2}{a^2})^{\lambda}\frac{x^{\lambda-1}}{\Gamma (-\lambda)}e^{-\frac{a^2}{2x}}1_{(0, +\infty)}(x), \; \; \; \lambda<0,
 \end{split}
\end{equation}
where $\Gamma(x)$ denotes the Gamma function.
Here $f_{GIG}(x; \lambda, 0, b)$ is the density function of a Gamma distribution $G(\lambda, \frac{2}{b^2})$ and $f_{GIG}(x; \lambda, a, 0)$ is the density function of a inverse Gamma distribution $iG(\lambda, \frac{a^2}{2} )$.

The GH distribution in dimension $d$ is denoted by $GH_d(\lambda, \alpha, \beta, \delta, \mu, \Sigma)$ and it satisfies $GH_d(\lambda, \alpha, \beta, \delta, \mu, \Sigma)\sim N(\mu+z\Sigma \beta, z\Sigma)\circ GIG(\lambda, \delta, \sqrt{\alpha^2-\beta^T\Sigma \beta})$. The parameter ranges of this distribution is $\lambda \in \R, \; \alpha, \delta \in \R_+,\; \beta, \mu \in \R^d$ and (i') $\delta\geq 0,\;  0\le \sqrt{\beta^T\Sigma \beta}<\alpha$ if $\lambda>0$, (ii') $\delta> 0,\;  0\le \sqrt{\beta^T\Sigma \beta}<\alpha$ if $\lambda=0$, (iii') $\delta> 0,\;  0\le \sqrt{\beta^T\Sigma \beta}\le \alpha$ if $\lambda<0$.
The class of GH distributions include two popular models in finance: if $\lambda =-\frac{1}{2}$ we have normal inverse Gaussian distribution which is denoted by $NIG_d(\alpha, \beta, \delta, \mu, \Sigma)$ and when $\lambda=\frac{1+d}{2}$ we have the class of hyperbolic distributions denoted by $HYP_d(\alpha, \beta, \delta, \mu, \Sigma)$. As in the case of the GIG distributions, the case 
$\delta=0$ in (i') above and the case 
$\sqrt{\beta^T\Sigma \beta}=\alpha$ or $\alpha=0$ in (iii') above needs to be understood as limiting cases of the GH distributions. If $\lambda>0, \delta \rightarrow 0$ in case (i') above then
\begin{equation}\label{4}
 GH_d(\lambda, \alpha, \beta, \delta, \mu, \Sigma)\overset{w}{\rightarrow } 
 N_d(\mu+z\Sigma \beta, z\Sigma)\circ G(\lambda, \frac{\alpha^2-\beta^T \Sigma \beta}{2})=:VG_d(\lambda, \alpha, \beta,\mu, \Sigma), 
\end{equation}
where $\overset{w}{=}$ denotes weak convergence of distributions and $VG_d$
represents the class of variance gamma distributions. If $\lambda<0$ and $\alpha \rightarrow 0$ as well as $\beta \rightarrow 0$ in case (iii') above we have the shifted $t$ distributions with degrees of freedom $-2\lambda$
\begin{equation}\label{5}
 GH_d(\lambda, \alpha, \beta, \delta, \mu, \Sigma)\overset{w}{\rightarrow } 
 N(\mu, z\Sigma)\circ iG(\lambda, \frac{\delta^2}{2})=:t_d(\lambda, \delta, \mu, \Sigma). 
\end{equation}
If $\alpha \rightarrow \infty, \delta \rightarrow \infty$ and  $\frac{\delta}{\alpha}\rightarrow \sigma^2<\infty$, we have the following that shows that the Normal random vectors are limiting cases of the GH distributions
\begin{equation}\label{6d}
 GH_d(\lambda, \alpha, \beta, \delta, \mu, \Sigma)\overset{w}{\rightarrow } 
 N(\mu+z\Sigma \beta, z\Sigma)\circ \epsilon_{\sigma^2}=: N(\mu+\sigma^2\Sigma \beta, \sigma^2\Sigma), 
\end{equation}
where $\epsilon_{\sigma^2}$ is the dirac function that equals to $1$ when $z=1$ and equals to zero otherwise. All these normal inverse Gaussian, hyperbolic, variance gamma, and student $t$ distributions are very popular models in finance, see  \cite{Kercheval}, \cite{Bingham_NICHOLAS_H_And_Kiesel_Rudigger_2001}, \cite{Birge-Bedoya}, \cite{Eberlein_Ernst_And_Keller_Ulrich_1995}, \cite{Hellmich_And_Kassberger_2011}, \cite{Madan_Dilip_B_And_Carr_Peter_P_And_Chang_Eric_C_1998}, \cite{Madan_Dilip_B_And_Seneta_Eugene_1990}, \cite{Prause_Karsten_1999}, \cite{Menca2005EstimationAT} for this.

The class of GIG distributions belong to the class of GGCs. A positive random variable $Z$ is a GGC, without translation term, if there exists a positive Radon measure $\nu$ on $\R_+$ such that
\begin{equation}\label{GGClap}
 \mathcal{L}_Z(s)=Ee^{-sZ}=e^{-\int_0^{\infty}\ln (1+\frac{s}{z})\nu(dz)},   
\end{equation}
with 
\begin{equation}\label{8}
 \int_0^1|lnx| \nu(dx)<\infty, \; \; \int_1^{\infty}\frac{1}{x}\nu(dx)<\infty.   
\end{equation}
The measure $\nu$ is called Thorin's measure associated with $Z$.
For the definition of the GGCs see the survey paper \cite{Mark-Yor-GGC}. In Proposition 1.1 of \cite{Mark-Yor-GGC}, it was shown that
any GGC random variable can be written as Wiener-Gamma integral
\begin{equation}\label{9}
Z=\int_0^{\infty}h(s)d\gamma_s,    
\end{equation}
where $h(s): \R_+ \rightarrow \R_+$ is a deterministic function with $\int_0^{\infty}ln(1+h(s))ds<\infty$ and $\{\gamma_s\}$ is a standard Gamma process with L\'evy measure $e^{-x}\frac{dx}{x}, x>0$. 

Proposition 1.23 of \cite{Hammerstein_EAv_2010} shows that the class of GIG random variables belongs to the class GGC. It provides the description of the corresponding Thorin's measures (in terms of the functions $U_{GIG}$ in the Proposition) for all the cases of parameters of GIG. The class of GGC distributions are rich as stated in the introduction of \cite{Mark-Yor-GGC} and we have the relation 
$\mathcal{G}\subset \mathcal{S}\subset \mathcal{J}$. In our model (\ref{one}) the mixing distribution $Z$ can be any distribution in $\mathcal{J}$. In fact, $Z$ can  be any non-negative random variable.

Given an initial endowment $W_0>0$, the investor must determine the portfolio weights x on the $n$ risky assets such that the expected utility of the next period wealth is maximized. The wealth that corresponds to portfolio weight $x$ on the risky assets is given by
\begin{equation}\label{wealth}
\begin{split}
W(x)=&W_0[1+(1-x^T1)r_f+x^TX] \\
=&W_0(1+r_f)+W_0[x^T(X-\1 r_f)]
\end{split}
\end{equation}
 and the investor's problem is
\begin{equation}\label{L2}
\max_{x\in D}\; EU(W(x)),\\
\end{equation}
for some domain $D$ of the portfolio set $D$. Note here that $x$ represents the portfolio weights on the risky assets and $1-x^T\1$ is the proportion of the initial wealth invested on the risk free asset. The portfolio weights $x$ on risky assets are allowed to be any vector in $D$.

The main goal of this paper is to discuss the solution of the problem (\ref{L2}) for exponential utility function $U$ when the returns of the risky assets have NMVM distribution as in  (\ref{one}). These type of utility maximization problems in one period models were studied in many papers in the past, see \cite{Madan_Mcphail}, \cite{Madan_Yen}, \cite{Pirvu_Kwak}, \cite{Zakamouline_Koekabakkar}, \cite{Birge_Chavez_2016}. Especially, the recent paper \cite{Birge-Bedoya} made an interesting observation that, with generalized hyperbolic models and with exponential utility, the optimal portfolios of the corresponding expected utility maximization problems can be written as a sum of two portfolios that are determined by the location and skewness parameters of the model (\ref{one}) separately. The present paper, extends their result to more general class of NMVM models as a compliment. 

The paper is organized as follows. In section 2 below we present closed form solution for optimal portfolio when the utility function $U$ is exponential. In section 3, we show that the optimal expected utilities in small financial markets converge to an overall best expected utility in a large financial market. In section 4  we present examples as applications of our results.

\section{Closed form solution for optimal portfolios under exponential utility}

In this section, we study the solution of the problem (\ref{L2}) when the utility function of the investor is exponential 
\begin{equation}
U(W)=-e^{-aW}, a>0,
\end{equation}
and when the investment opportunity set consists of the above stated $d+1$  assets. Below we obtain an expression that relates $EU(W)$ to the Laplace transformation of the mixing distribution $Z$ as in (\ref{L3}) below. First observe that we have
 \begin{equation}\label{13-new}
 W(x)\overset{d}{=}W_0(1+r_f)+W_0[x^T(\mu-\1r_f)+x^T\gamma Z+\sqrt{x^T\Sigma x}\sqrt{Z}N(0, 1)].    
 \end{equation}
 
\begin{lemma}\label{2.1} For any portfolio $x\in \R^d$ such that $EU(W(x))$ is finite we have
\begin{equation}\label{L3}
EU(W(x))=-e^{-aW_0(1+r_f)}e^{-aW_0x^T(\mu-\1 r_f)}\mathcal{L}_Z\Big(aW_0x^T\gamma - \frac{a^2W_0^2}{2}x^T\Sigma x\Big ),    
\end{equation}
where $\mathcal{L}_Z(s)=Ee^{-sZ}$ is the Laplace transformation of $Z$.
\end{lemma}
\begin{proof} From (\ref{13-new}), we have
\begin{equation*}
\begin{split}
EU(W(x))=&-Ee^{-aW_0(1+r_f)-aW_0\Big [x^T(\mu-\1 r_f)+x^T\gamma Z+\sqrt{x^T\Sigma x}\sqrt{Z}N(0, 1)\Big ]}\\
=&-e^{-aW_0(1+r_f)}e^{-aW_0x^T(\mu-\1 r_f)}\int_0^{+\infty}Ee^{-aW_0x^T\gamma z-aW_0\sqrt{x^T\Sigma x}\sqrt{z}N(0, 1)}f_Z(z)dz\\
=&-e^{-aW_0(1+r_f)}e^{-aW_0x^T(\mu-\1 r_f)}\int_0^{+\infty}e^{-aW_0x^T\gamma z}Ee^{-aW_0\sqrt{x^T\Sigma x}\sqrt{z}N(0, 1)}f_Z(z)dz\\
=&-e^{-aW_0(1+r_f)}e^{-aW_0x^T(\mu-\1 r_f)}\int_0^{+\infty}e^{-aW_0x^T\gamma z}e^{\frac{a^2W_0^2}{2}x^T\Sigma xz}f_Z(z)dz\\
=&-e^{-aW_0(1+r_f)}e^{-aW_0x^T(\mu-\1 r_f)}\int_0^{+\infty}e^{-(aW_0x^T\gamma-\frac{a^2W_0^2}{2}x^T\Sigma x)z}f_Z(z)dz\\
=&-e^{-aW_0(1+r_f)}e^{-aW_0x^T(\mu-\1 r_f)}\mathcal{L}_Z(aW_0x^T\gamma-\frac{a^2W_0^2}{2}x^T\Sigma x).\\
\end{split}
\end{equation*}
\end{proof}
\begin{remark} \label{rem-mu-r}If $\mu-\1r_f=0$ in our model (\ref{one}), from (\ref{L3}) we have 
\begin{equation*}
EU(W(x))=-e^{-aW_0(1+r_f)}\mathcal{L}_Z\Big(aW_0x^T\gamma - \frac{a^2W_0^2}{2}x^T\Sigma x\Big ).    
\end{equation*}
Since $\mathcal{L}_Z(s)$ is a strictly decreasing function, the expected utility maximization problem becomes the maximization problem of the quadratic function  $aW_0x^T\gamma - \frac{a^2W_0^2}{2}x^T\Sigma x$ in this case.
Therefore for the rest of the paper, we assume that 
our model (\ref{one}) is such that $\mu-\1 r_f\neq 0$. Also we assume $Z\neq 0$ with positive probability.
\end{remark}
\begin{remark} By using the relation (\ref{L2}) and by checking the first order condition for optimality it is easy to see that the optimal portfolio $x^{\star}$ satisfies the following relation
\begin{equation}\label{optxstar}
x^{\star}=\frac{1}{aW_0}[\Sigma^{-1}\gamma-\frac{\mathcal{L}(g(x^{\star}))}{\mathcal{L}'(g(x^{\star}))}\Sigma^{-1}(\mu-\1r_f)],
\end{equation}
where $g(x)$ is given in the expression (\ref{Gx}) below. There are several questions that one needs to address when applying the direct approach (\ref{optxstar}) in obtaining the optimal portfolio $x^{\star}$: (i) if the function $x\rightarrow EU(W(x))$ is continuously differentiable (ii) if the optimal portfolio is the interior point of the corresponding domain (iii) if the equation (\ref{optxstar}) has unique solution.  After these questions are addressed the next challenge becomes how to compute $x^{\star}$ numerically. This problem is not trivial if the dimension $d$ is a large number, i.e., $x\in \R^d$ for large $d$. To overcome these problems, in this paper we take different approach and obtain $x^{\star}$ in near closed form: to calculate $x^{\star}$ we only need to find the minimizing point of a convex function on the real line. 
\end{remark}

The above Lemma \ref{L3}, expresses the expected utility in terms of a linear function $x^T(\mu-\1r_f)$  and  a quadratic function $aW_0x^T\gamma - \frac{a^2W_0^2}{2}x^T\Sigma x$ of the portfolio $x\in \R^n$. For convenience, we introduce the following notations 
\begin{equation}\label{Gx}
\begin{split}
g(x)=:&aW_0x^T\gamma-\frac{a^2W_0^2}{2}x^T\Sigma x,\\
G(x)=:&e^{-aW_0x^T(\mu-\1 r_f)}\mathcal{L}_Z\Big (aW_0x^T\gamma -\frac{a^2W_0^2}{2}x^T\Sigma x\Big ),\\
=&e^{-aW_0x^T(\mu-\1 r_f)}\mathcal{L}_Z\Big (g(x)\Big ).
\end{split}
\end{equation}

Then the relation (\ref{L3}) becomes
\begin{equation}
EU(W)=-e^{aW_0(1+r_f)}G(x)=-e^{aW_0(1+r_f)}e^{-aW_0x^T(\mu-\1 r_f)}\mathcal{L}_Z\big (g(x)\big ).   
\end{equation}
Therefore we have the following obvious relation
\begin{equation}\label{arG}
\arg\max_{x\in D}EU(W)=\arg\min_{x\in D}G(x)
\end{equation}
for any domain $D\in \R^d$ of the portfolio set. Note here that the equality in (\ref{arG}) means the equality of two sets if the optimizing points are more than one.

Our goal in this section is to give closed form solution for the problem (\ref{L2}) for some  domains of the portfolio set. Before we start our analysis, we first present the following example.

\begin{example}\label{ex2.2} Consider the model (\ref{one}) with $\gamma=0$ and with the mixing distribution $Z\sim e^{N(0, 1)}$. Then for any $x\neq 0$ we have
\[
EU(W(x))=-\infty.
\]
To see this, assume that there is a $x\neq 0$ such that $EU(W(x))$ is finite. Then by Lemma  \ref{2.1} we have
\[
EU(W(x))=-e^{-aW_0(1+r_f)}e^{-aW_0x^T(\mu-\1 r_f)}\mathcal{L}_Z\Big(- \frac{a^2W_0^2}{2}x^T\Sigma x\Big ).  
\]
For any $x\neq 0$ we have $x^T\Sigma x>0$ as $\Sigma$ is positive definite by the assumption of the model (\ref{one}). Now it is well known that when $Z\sim e^{N(0, 1)}$ we have $\mathcal{L}_Z(s)=+\infty$ whenever $s<0$. Therefore $\mathcal{L}_Z\Big(- \frac{a^2W_0^2}{2}x^T\Sigma x\Big )=+\infty$ whenever $x\neq 0$ and this contradicts with the finiteness assumption of  $EU(W(x))$ made above. Thus we have $EU(W(x))=-\infty$ whenever $x\neq 0$. Therefore the problem (\ref{L2}) does not have a solution when the domain $D$ does not include the zero vector in it. But if $0\in D$, then $x=0$ is the optimal portfolio and $\max_{x\in D}\; EU(W(x))=-e^{-aW_0(1+r_f)}$. This case corresponds to 
investing all the initial wealth $W_0$ on the risk-free asset as an optimal portfolio. 
\end{example}

The above Example \ref{ex2.2} shows that when the model (\ref{one}) satisfies the conditions in the example and when $0\in D$, the zero portfolio $x=0$ is an optimal portfolio as when $x\neq 0$ one has $EU(W(x))=-\infty$ always. It is obvious that, in this case, the function  $x\rightarrow EU(W(x))$ is not differentiable at $x=0$. Therefore we call $x=0$ irregular solution for the optimization problem (\ref{arG}). Before we give formal definition of irregularity, 
we first introduce the following definition.

\begin{definition}\label{defs0} For any mixing distribution $Z$, if $\mathcal{L}_Z(s)<\infty$ for all $s\in \R$ we set $\s=-\infty$ and if $\mathcal{L}_Z(s)<\infty$ for some $s\in \R$ and $\mathcal{L}_Z(s)=+\infty$ for some $s\in \R$, we let $\s$ be the real number such that 
\begin{equation}\label{s00}
\mathcal{L}_Z(s)=Ee^{-sZ}<\infty, \; \forall s>\s\;\;  \mbox{and} \;\; \mathcal{L}_Z(s)=Ee^{-sZ}=+\infty, \; \forall s<\s. 
\end{equation}
We call $\s$ the critical value (CV) (we use the acronym CV-L from now on, where $L$ implies that it is CV in the context of  Laplace transformation. One can also define this CV in the context of moment generating functions and in this case an acronym CV-M can be used) of $Z$ under Laplace transformation in this paper. Observe that since $Z$ is non-negative random variable we always have $\s\le 0$.
\end{definition}

\begin{remark}\label{rem2.2} In the above definition \ref{defs0}, the value of $\mathcal{L}_Z(s)$ at $s=\s$ is not specified. Both of the cases 
$\mathcal{L}_Z(\s)<\infty$ and $\mathcal{L}_Z(\s)=+\infty$ are possible.
For example if $Z\sim e^{N(0, 1)}$, then $\s=0$ and clearly $\mathcal{L}_Z(0)=1<\infty$. If $Z\sim x^{\alpha-1}e^{-x/\beta}/ [\Gamma(\alpha)\beta^{\alpha}]$ is a Gamma distribution, then $\mathcal{L}_Z(s)=1/[(1+\beta s)^{\alpha}]$. In this case $\s=-1/\beta$ and we have $\mathcal{L}_Z(\s)=+\infty$. 
\end{remark}

Below we define some domains for the portfolio set.

\begin{equation}\label{Sa}
\begin{split}
S_a=:&\{x\in \R^d: aW_0x^T\gamma-\frac{a^2W_0^2}{2}x^T\Sigma x>\s\},\\
\partial S_a=:&\{x\in \R^d: aW_0x^T\gamma-\frac{a^2W_0^2}{2}x^T\Sigma x= \s\},\\
\bar{S}_a=:&S_a\cup \partial S_a.
\end{split}
\end{equation}

\begin{remark}\label{rem2.55} Our main objective in this section is to find closed form solution for the optimal portfolio for the problem
\begin{equation}\label{L22}
\max_{x\in \R^d}\; EU(W(x)).    
\end{equation}
 The following relations are easy to see \begin{equation}\label{saprob}
\max_{x\in \R^d}\; EU(W(x))=\max_{x\in S_a}EU(W(x)),    
\end{equation}
if $\mathcal{L}_Z(\s)=+\infty$ and 
\begin{equation}\label{barsaprob}
\max_{x\in \R^d}\; EU(W(x))=\max_{x\in \bar{S}_a}EU(W(x)),    
\end{equation}
if $\mathcal{L}_Z(\s)<+\infty$.
Observe here that if $\s<0$, then $S_a$ is a nonempty set as the zero vector $x=0$ is in it. If $\s=0$, then the set $\bar{S}_a$ is nonempty as $x=0$ is in it.
\end{remark}

In this section we attempt  to give closed form solutions for the problems (\ref{saprob}) and (\ref{barsaprob}) above. Our approach for this is based on the following idea: we fix the term $x^T(\mu-\1r_f)$ at some constant level $c$ and optimize the quadratic term $aW_0x^T\gamma - \frac{a^2W_0^2}{2}x^T\Sigma x$ in (\ref{L3}). More specifically, we solve the following optimization problem
\begin{equation}\label{Qopt}
\begin{split}
 \max_{x} \; \;  & aW_0x^T\gamma - \frac{a^2W_0^2}{2}x^T\Sigma x, \\
s.t.\;\;  & x^T(\mu-r_f\1)=c,\\
\end{split}
\end{equation}
first and plug in the solution, which we denote by $x_c$, into the expression (\ref{L3}) so that the utility maximization problem becomes an optimization problem of a function of one variable $c$. 

\begin{lemma}\label{lem2.6} Consider the optimization problem (\ref{L22}). Let $\bar{x}\in \R^d$ be a solution for this problem. Then $\bar{x}$ solves (\ref{Qopt}) for some $c$.  \end{lemma}
\begin{proof} Define $\bar{c}=:\bar{x}^T(\mu-\l r_f)$. Let $\tilde{x}$ be the solution to the problem (\ref{Qopt}) with $c$ replaced by $\bar{c}$ (here the solution is unique as $\Sigma$ is positive definite by assumption). By the optimality of $\tilde{x}$, we have $g(\bar{x})\le g(\tilde{x})$. Since $\mathcal{L}_Z(s)$ is a decreasing function we have $\mathcal{L}_Z(g(\tilde{x}))\le \mathcal{L}_Z(g(\bar{x}))$. Since $\bar{c}=\bar{x}^T(\mu-\l r_f)=\tilde{x}^T(\mu-\l r_f)$
we have $G(\tilde{x})\le G(\bar{x})$. This shows that $EU(W(\tilde{x}))\geq EU(W(\bar{x}))$. But $\bar{x}$ is optimal for (\ref{L2}) with $D=\R^d$. Therefore we should have $EU(W(\tilde{x}))=EU(W(\bar{x}))$. This implies $G(\tilde{x})=G(\bar{x})$ and this in turn implies 
$g(\bar{x})=g(\tilde{x})$ again due to $\bar{c}=\bar{x}^T(\mu-\l r_f)=\tilde{x}^T(\mu-\l r_f)$. The uniqueness of the optimization point for (\ref{Qopt}) then implies
$\bar{x}=\tilde{x}$. 
\end{proof}

\begin{remark} The Lemma \ref{lem2.6} above gives a characterization of the optimal portfolios for the problem (\ref{L2}). But it doesn't tell us if the optimal portfolio for the problem (\ref{lem2.6}) is unique. It shows only that any optimal portfolio for the problem (\ref{L2}) solves a quadratic optimization problem (\ref{Qopt}) for some appropriate $c$.
 Now consider the case of example \ref{ex2.2}. In the setting of this example, consider the utility maximization problem (\ref{L2}). Since $0\in \R^d$,  as explained in the Example \ref{ex2.2}, the vector $\hat{x}=0$ is the solution for the optimization problem (\ref{L2}). Now let $x^{\star}$ be the optimal solution for 
the problem (\ref{Qopt}) with $c=0$ (which means $(x^{\star})^T(\mu-r_f\1))=0$). Then we should have $g(x^{\star})\ge g(\hat{x})$. But if $g(x^{\star})> g(\hat{x})$, then $\hat{x}=0$ can not be optimal solution for (\ref{L2}). Therefore we should have   $g(x^{\star})=g(\hat{x})$. The uniqueness of the optimal solution for (\ref{Qopt}) with $c=0$ then implies $x^{\star}=\hat{x}=0$.
\end{remark}

\begin{definition}\label{def2.8} Consider the optimization problem 
(\ref{L2}) for some given model (\ref{one}) and for some domain $D\subset \R^d$. Let $\s$ denote the CV-L of the mixing distribution $Z$. Let $x^{\star}\in D$ be a solution for (\ref{L2}). We say that $x^{\star}$ is irregular if $g(x^{\star})=\s$. If $g(x^{\star})>\s$, we call the solution $x^{\star}$  regular. 
\end{definition}

\begin{remark}\label{rem-IN} Clearly the definition of irregular and regular solutions depend on the CV-L number $\s$ of the mixing distribution $Z$ in (\ref{one}).
If $\mathcal{L}_Z(\s)=+\infty$, then the solution to (\ref{L2}) can not be irregular. Therefore, the irregularity can happen only when $\mathcal{L}_Z(\s)<+\infty$. Observe that the solution $x=0$ in Example \ref{ex2.2} is an irregular solution.
\end{remark}

\begin{remark} Consider the optimization problem (\ref{L2}). From Lemma \ref{lem2.6}, any optimal portfolio $x^{\star}$ is a solution for the quadratic optimization problem (\ref{Qopt}) with $x^T(\mu-r_f\1)=c^{\star}$ for some fixed $c^{\star}$. If $x^{\star}$ is irregular, then 
$g(x^{\star})=\s$. The optimality and uniqueness (on the hyperplane $x^T(\mu-r_f\1)=c^{\star}$) of $x^{\star}$ implies that we have $g(x)<g(x^{\star})=\s$ for all $x\neq x^{\star}$ on the hyperplane
$x^T(\mu-r_f\1)=c^{\star}$. Therefore we have $EU(W(x))=-\infty$ for all $x\neq x^{\star}$
on the hyperplane $x^T(\mu-r_f\1)=c^{\star}$. From this we conclude that if the optimal portfolio for the problem  (\ref{Qopt}) is irregular, then any small neighborhood of this portfolio contains some portfolios with infinite expected utility. In comparison, if the optimal portfolio is regular, then it has a small ball around it with finite expected value for each portfolio in this small ball.  
\end{remark}

As it was shown in Lemma \ref{lem2.6}, the solutions of the utility maximization (\ref{L2}) can be obtained by solving the quadratic optimization problem (\ref{Qopt}). For a given optimization problem (\ref{L2}), if we know the corresponding $c$ in (\ref{Qopt}) such that the solution of (\ref{Qopt}) is the solution of (\ref{L2}), then we just need to solve the optimization problem (\ref{Qopt}) to obtain the optimal portfolio.
But figuring out such an $c$ is not a trivial issue. We first prove following Lemma.

\begin{lemma}\label{2.1xc} For any real number $c$, when $x^T(\mu-\1 r_f)=c$, the maximizing point $x_c$ of $g(x)$ is given by
\begin{equation}\label{xc}
x_c=\frac{1}{aW_0}\Big [\Sigma^{-1}\gamma-q_c\Sigma^{-1}(\mu-\1 r_f)\Big ],   
\end{equation}
and we have
\begin{equation}\label{gq}
 g(x_c)=\frac{1}{2}\gamma^T\Sigma^{-1}\gamma-\frac{q^2_c}{2}(\mu-\1 r_f)^T\Sigma^{-1}(\mu-\1 r_f),
\end{equation}
where 
\begin{equation}\label{qcc}
q_c=\frac{\gamma^T\Sigma^{-1}(\mu-\1 r_f)-aW_0c}{(\mu-\1 r_f)^T\Sigma^{-1}(\mu-\1 r_f)}.    
\end{equation}
\end{lemma}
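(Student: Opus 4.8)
The plan is to treat this as the maximization of a strictly concave quadratic over an affine hyperplane and to extract the maximizer via a single Lagrange multiplier. First I would record that, since $\Sigma=AA^T$ with $A$ invertible (the transformation $\mathcal T$ being one-to-one), $\Sigma$ is symmetric and positive definite; hence $g(x)=aW_0x^T\gamma-\tfrac{a^2W_0^2}{2}x^T\Sigma x$ is strictly concave, while the constraint set $\{x:x^T(\mu-\1 r_f)=c\}$ is an affine subspace. Consequently the restriction of $g$ to this subspace admits a unique stationary point, which is automatically its global maximizer, so it suffices to solve the first-order conditions.

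Next I would form the Lagrangian $g(x)-\lambda\big(x^T(\mu-\1 r_f)-c\big)$ and set its gradient in $x$ to zero. Writing $b\eqdefl\mu-\1 r_f$ for brevity, this reads $aW_0\gamma-a^2W_0^2\Sigma x-\lambda b=0$, and solving (using $\Sigma^{-1}$) gives
\[
x=\frac{1}{aW_0}\Sigma^{-1}\gamma-\frac{\lambda}{a^2W_0^2}\Sigma^{-1}b.
\]
This already has the shape of (\ref{xc}) once one sets $q_c=\lambda/(aW_0)$. To determine $q_c$ I would substitute this $x$ into the constraint $x^Tb=c$; using the symmetry of $\Sigma^{-1}$ to identify $\gamma^T\Sigma^{-1}b=b^T\Sigma^{-1}\gamma$, the constraint reduces to a single scalar linear equation in $q_c$ whose solution is exactly (\ref{qcc}).

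The remaining step is to evaluate $g$ at $x_c$. I would first compute $\Sigma x_c=\tfrac{1}{aW_0}(\gamma-q_c b)$ and then expand both $aW_0x_c^T\gamma$ and $\tfrac{a^2W_0^2}{2}x_c^T\Sigma x_c$; invoking $\gamma^T\Sigma^{-1}b=b^T\Sigma^{-1}\gamma$ once more, the two cross terms in $q_c$ cancel and the $\gamma^T\Sigma^{-1}\gamma$ contributions combine to leave precisely (\ref{gq}). This bookkeeping, keeping track of which quadratic forms appear and checking that the cross terms telescope, is the only place where care is needed; the computation is otherwise routine, and I regard it as the main (if modest) obstacle.

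Finally I would address the fact that the optimization is posed over $S_a$ rather than over the whole hyperplane. Observe from (\ref{Sa}) that $S_a=\{x:g(x)\ge s_0\}$ is a superlevel set of the very objective being maximized. Hence, whenever $S_a\cap\{x^Tb=c\}$ is nonempty, the hyperplane maximizer $x_c$ satisfies $g(x_c)\ge g(x')\ge s_0$ for any feasible $x'$, so $x_c\in S_a$; no separate feasibility check is required, and $x_c$ is indeed the maximizer over $S_a$ on the slice $x^Tb=c$. I expect no genuine difficulty here beyond recording this observation.
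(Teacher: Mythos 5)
Your proposal is correct and follows essentially the same route as the paper's proof: form the Lagrangian for maximizing $g$ on the hyperplane $\{x: x^T(\mu-\1 r_f)=c\}$, solve the first-order condition, substitute into the constraint to identify the multiplier (hence $q_c$), and evaluate $g(x_c)$ directly. Your two additions — the strict-concavity argument guaranteeing that the stationary point is the global maximizer, and the observation that $S_a$ is a superlevel set of $g$ so that $x_c$ is automatically feasible — are refinements of steps the paper leaves implicit, not a different method.
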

\begin{proof} We form the Lagrangian $L=g(x)+\lambda (c-x^T(\mu-\1 r_f))$ with the  Lagrangian parameter $\lambda$. Denoting the maximizing point by $x_c$, the first order condition gives
\begin{equation}\label{24}
x_c=\frac{1}{aW_0}\Sigma^{-1}\gamma -\frac{\lambda}{a^2W_0^2}\Sigma^{-1}(\mu-\1 r_f).    
\end{equation}
We plug $x_c$ into $x^T_c(\mu-\1 r_f)=c$ and obtain
\begin{equation}
c= \frac{1}{aW_0}\gamma^T \Sigma^{-1}(\mu-\1 r_f)-\frac{\lambda}{a^2W_0^2}(\mu-\1 r_f)^T\Sigma^{-1}(\mu-\1 r_f).    
\end{equation}
From this we find $\lambda$ as follows

\begin{equation}
\lambda= \frac{aW_0\gamma^T \Sigma^{-1}(\mu-\1 r_f)-ca^2W_0^2}{(\mu-\1 r_f)^T\Sigma^{-1}(\mu-\1 r_f)}.    
\end{equation}
Then we plug $\lambda$ into the expression (\ref{24}) of $x_c$ above and obtain (\ref{xc}). To obtain (\ref{gq}), we plug $x_c$ into $g(x)$ in (\ref{Gx}). After doing some algebra we obtain
\begin{equation}
g(x_c)=\frac{1}{2}\gamma^T\Sigma^{-1}\gamma-\frac{1}{2}q_c^2(\mu-\1 r_f)^T\Sigma^{-1}(\mu-\1 r_f), 
\end{equation}
with $q_c$ given as in (\ref{qcc}). This completes the proof.
\end{proof}

For the rest of the paper, as in \cite{Birge-Bedoya}, for convenience, we use the following notations
\begin{equation}\label{ABC}
\mathcal{A}=\gamma^T\Sigma^{-1}\gamma,\; \mathcal{C}=(\mu-\1 r_f)^T\Sigma^{-1}(\mu-\1 r_f),\; \mathcal{B}= \gamma^T \Sigma^{-1}(\mu-\1 r_f).   
\end{equation}
We first observe that $\C>0$ due to the assumption in Remark \ref{rem-mu-r} and the assumption on positive definiteness  of $\Sigma$. With these notations we have
\begin{equation}\label{gqc}
g(x_c)=\frac{\mathcal{A}}{2}-\frac{q_c^2}{2}\mathcal{C}, \; \; q_c=\frac{\mathcal{B}}{\mathcal{C}}-\frac{aW_0}{\mathcal{C}}c.    
\end{equation}

From the relation (\ref{gqc}), we express $c$ as a function of $q_c$
as follows
\begin{equation}\label{qccc}
c=\frac{1}{aW_0}[\mathcal{B}-\mathcal{C}q_c].
\end{equation}

We define the following function 
\begin{equation}\label{H}
Q(\theta)=e^{\mathcal{C}\theta}\mathcal{L}_Z\Big [\frac{1}{2}\mathcal{A}-\frac{\theta^2}{2}\mathcal{C} \Big ],    
\end{equation}
and we define $\tha=:\sqrt{\frac{\mathcal{A}-2\s}{\mathcal{C}}}$, where $\s$ is the IN of $Z$. If $\s=-\infty$,  the $\tha$ is understood to be equal to $+\infty$. Note here that $\s\le 0$ as $Z$ is non-negative random variable. Therefore $\tha$  is well defined. If $\mathcal{L}_Z(\s)<+\infty$, $Q(\theta)$ is finite iff  $\frac{1}{2}\mathcal{A}-\frac{\theta^2}{2}\mathcal{C}\geq \s$ and this translates into: $Q(\theta)$ is finite iff 
$\theta\in [-\tha, \tha]$. If $\mathcal{L}_Z(\s)=+\infty$, $Q(\theta)$ is finite iff $\frac{1}{2}\mathcal{A}-\frac{\theta^2}{2}\mathcal{C}>\s$ and this translates into: $Q(\theta)$ is finite iff $\theta \in (-\tha, \tha)$.

Next we  prove the following Lemma that relates $Q$ to $G$.
\begin{lemma}\label{lem2.14} Let $x_c$ be the solution for the problem (\ref{Qopt}) for a given $c$. Assume $x_c\in S_a$ if $\mathcal{L}_Z(\s)=+\infty$ and $x_c\in \bar{S}_a$ if $\mathcal{L}_Z(\s)<+\infty$. Then for any $x$ with $ x^T(\mu-\1 r_f)=c$, we have
\begin{equation}
e^{-\mathcal{B}}Q(q_c)\le G(x),    
\end{equation}
where $q_c$ is given by (\ref{qcc}) and $\mathcal{B}$ is given  by (\ref{ABC}). We also have $e^{-\mathcal{B}}Q(q_c)=G(x_c)$.
\end{lemma}
\begin{proof} Note that $G(x)=e^{-aW_0x^T(\mu-\1 r_f)}\mathcal{L}_Z(g(x))$. The stated conditions on $x_c$ in the Lemma insures that $G(x_c)=e^{-aW_0c}\mathcal{L}_Z(g(x_c))$ is finite.
Since $g(x)\le g(x_c)$ for any $x$ with 
$x^T(\mu-\1 r_f)=c$ by the definition of $x_c$ (the optimizing point) and also since $\mathcal{L}_Z(s)$ is a decreasing function of $s$ we have
\begin{equation}\label{HG}
G(x_c)\le G(x)
\end{equation}
for any $x$ with $x^T(\mu-\1 r_f)=c$. We plug the $c$ in (\ref{qccc}) into the expression of $G(x_c)$ and obtain
\begin{equation}\label{hqc}
G(x_c)=e^{-\mathcal{B}} e^{\mathcal{C}q_{c}}\mathcal{L}_Z \Big [\frac{1}{2}\mathcal{A}-\frac{q_{c}^2}{2}\mathcal{C} \Big ]=e^{-\mathcal{B}}Q(q_c). 
\end{equation}
\end{proof}

\begin{remark}\label{rem2.6} The above Lemma \ref{lem2.14} shows that the function $G(x)$ achieves its unique (as the solution for (\ref{Qopt}) is unique in a hyperplane) minimum value on the hyperplane $x^T(\mu-r_f\1)=c$ at $x_c$ and its minimum value is given by $e^{\mathcal{-B}}Q(q_c)$ with $q_c$ in (\ref{gqc}). For any $\theta_0\in [-\tha, \tha]$, we can let $c_0$ be such that $q_{c_0}=\theta_0$.
Let $x_0$ be the optimal solution of (\ref{Qopt}) with $c$ replaced by $c_0$. From Lemma \ref{2.1xc}, we have $g(x_0)=\frac{1}{2}\A-\frac{q_{c_0}^2}{2}\C$. If $|q_{c_0}|=\tha$, then $g(x_0)=\s$. If $|q_{c_0}|<\tha$, then 
$g(x_0)>\s$.

\end{remark}

\begin{theorem}\label{2.5} Consider the optimization problem (\ref{L22}). A portfolio  $x^{\star}$ is a solution for (\ref{L22}) if and only if 
\begin{equation}\label{themain}
x^{\star}=\frac{1}{aW_0}\Big [\Sigma^{-1}\gamma -q_{min}\Sigma^{-1}(\mu-\1 r_f)\Big ],    
\end{equation}
for some
\begin{equation}\label{32q}
 q_{min}\in \arg min_{\theta \in \Theta}Q(\theta),
 \end{equation}
where $\Theta=[-\tha, \tha]$ if $\tha=\sqrt{\frac{\mathcal{A}-2\s}{\mathcal{C}}}<\infty$ and $\Theta=(-\infty, +\infty)$ if $\tha=+\infty$. Here $\s$ is the CV-L of the mixing distribution $Z$.
\end{theorem}

\begin{proof} First we show that if
$\hat{x}$ is a  solution for (\ref{L22}), then $\hat{x}$ is given by 
(\ref{themain}). By Lemma \ref{lem2.6}, $\hat{x}$ is a solution for the optimization problem (\ref{Qopt}) with some $c=\hat{c}$. By Lemma  \ref{2.1xc}, $\hat{x}$ takes the following form
\[
\hat{x}=\frac{1}{aW_0}\Big [\Sigma^{-1}\gamma-\hat{q}\Sigma^{-1}(\mu-\1 r_f)\Big ],
\]
with $\hat{q}=\B/\C-(aW_0/\C)\hat{c}$. Again by Lemma \ref{2.1xc} we have (see (\ref{gqc}))
\[
g(\hat{x})=\frac{\A}{2}-\frac{(\hat{q})^2}{2}\C.
\]
Since $\hat{x}$ is a solution for (\ref{L22}) we have $G(\hat{x})<\infty$ and this implies $g(\hat{x})\geq \s$ if $\s$ is finite and $g(\hat{x})>\s$ if $\s=-\infty$ (note that $g(\hat{x})=-\infty$ implies $G(\hat{x})=+\infty$ due to the assumption $Z\neq 0$ in Remark \ref{rem-mu-r} and $G(\hat{x})=e^{-aW_0\hat{x}^T(\mu-r_f\1)}\mathcal{L}_Z(g(\hat{x}))$). The expression of $g(\hat{x})$ above then implies $\hat{q}\in \Theta$ (note here that for the case  $\tha=+\infty$, we can't have $\hat{q}^2=+\infty$ as $g(\hat{x})$ is finite as explained above).

Now we need to show $\hat{q}\in \arg min_{\theta \in \Theta }Q(\theta)$. From Lemma \ref{lem2.14}, we have $G(\hat{x})=e^{-\B}Q(\hat{q})$. Take any $\theta_0\in \Theta$ (including the case $\Theta=(-\infty, +\infty)$). Let $c_0$ be such that $\theta_0=q_{c_0}$ (see Remark \ref{rem2.6}). Let $x_0$ be the solution for (\ref{Qopt}) with $c$ replaced by $c_0$. By Lemma \ref{2.1xc} we have $g(x_0)=\frac{\A}{2}-\frac{(q_{c_0})^2}{2}\C$. Since $\theta_0=q_{c_0}\in \Theta$, we have $g(x_0)\geq \s$ if $\s$ is finite and $g(x_0)> \s$ if $\s=-\infty$. Therefore either $x_0\in S_a$ or $x_0\in \bar{S}_a$. Then by Lemma \ref{lem2.14} we have $G(x_0)=e^{-\B}Q(q_{c_0})$. Since $\hat{x}$ is the optimal portfolio it is the minimizing point for the function $G(x)$ (see (\ref{arG}) for this). Therefore we have $G(\hat{x})\le G(x_0)$. This implies $Q(\hat{q})\le Q(q_{c_0})=Q(\theta_0)$. Since $\theta_0$ is arbitrary, we conclude that $\hat{q}\in \arg min_{\theta \in \Theta }Q(\theta)$.

Next we show that any portfolio of the form (\ref{themain}) is an optimal portfolio for (\ref{L22}). Fix an arbitrary $q_m\in \arg min_{\theta \in \Theta}Q(\theta)$. Then $q_m\in [-\tha, \tha]$ if $\tha$ is finite and $q_m\in (-\infty, +\infty)$ if $\tha=+\infty$. Let $c_m$ be such that $q_m=q_{c_m}$ and let $x_m$ be the solution of (\ref{Qopt}) with $c$ replaced by $c_m$. By Lemma \ref{2.1xc},  we have 
\[
x_m=\frac{1}{aW_0}\Big [\Sigma^{-1}\gamma -q_m\Sigma^{-1}(\mu-\1 r_f)\Big ],  
\]
and $g(x_m)=\frac{\A}{2}-\frac{q_m^2}{2}\C$. The condition on $q_m$ above implies $g(x_m)\geq \s$ if $\s$ is finite and $g(x_m)>-\infty$ if $\s=-\infty$. Therefore either $x_m\in S_a$ or $x_m\in \bar{S}_a$. By Lemma \ref{lem2.14} we have $G(x_m)=e^{-\B}Q(q_m)$ which is a finite number. To show $x_m$ is an optimal portfolio we need to show $G(x_m)\le G(x)$ for any $x$ that $G(x)$ is finite (note that either $G(x)=+\infty$ or it is finite). Fix an arbitrary $\bar{x}$ with $G(\bar{x})<+\infty$. Let $\bar{c}=\bar{x}^T(\mu-r_f\1)$. Let $x_{\bar{c}}$ be the solution of (\ref{Qopt}) with $c$ replaced by $\bar{c}$. Since $G(x)<\infty$ we either have $x\in \bar{S}_a$
or $x\in S_a$. This means that $x_{\bar{c}}\in \bar{S}_a$.  By Lemma \ref{2.1xc} we have 
$g(x_{\bar{c}})=\frac{\A}{2}-\frac{q_{\bar{c}}^2}{2}\C$, where $q_{\bar{c}}$ is given by (\ref{gqc}) with $c$ replaced by $\bar{c}$. Therefore we have $q_{\bar{c}}\in [-\tha, \tha]$ if $\tha$ is finite and $q_{\bar{c}}\in (-\infty, +\infty)$ if $\tha=+\infty$.  By the definition of $q_m$, we have $Q(q_m)\le Q(q_{\bar{c}})$. Therefore we have 
$G(x_m)=e^{-\B}Q(q_m)\le e^{-\B}Q(q_{\bar{c}})=G(\bar{x})$.
\end{proof}

\begin{proposition}\label{2.55} Consider the optimization problem (\ref{L22}). If  $x^{\star}$ is a regular solution for (\ref{L22}) then
\begin{equation}\label{themainn}
x^{\star}=\frac{1}{aW_0}\Big [\Sigma^{-1}\gamma -q_{min}\Sigma^{-1}(\mu-\1 r_f)\Big ],    
\end{equation}
for some
\begin{equation}\label{32qq}
 q_{min}\in \arg min_{\theta \in (-\tha, \tha)}Q(\theta),
 \end{equation}
where $\tha=:\sqrt{\frac{\mathcal{A}-2\s}{\mathcal{C}}}$ and $\s$ is the CV-L of the mixing distribution $Z$. 
\end{proposition}

\begin{proof} Let $\hat{x}$ be a regular solution. By Lemma \ref{lem2.6}, $\hat{x}$ is a solution for the optimization problem (\ref{Qopt}) with some $c=\hat{c}$. By Lemma  \ref{2.1xc}, $\hat{x}$ takes the following form
\[
\hat{x}=\frac{1}{aW_0}\Big [\Sigma^{-1}\gamma-\hat{q}\Sigma^{-1}(\mu-\1 r_f)\Big ],
\]
with $\hat{q}=\B/\C-(aW_0/\C)\hat{c}$. Again by Lemma \ref{2.1xc} we have (see (\ref{gqc}))
\[
g(\hat{x})=\frac{\A}{2}-\frac{(\hat{q})^2}{2}\C.
\]
Since $\hat{x}$ is regular we have $g(\hat{x})>\s$. From this we conclude $\hat{q}\in (-\tha, \tha)$. From Lemma \ref{lem2.14}, we have $G(\hat{x})=e^{-\B}Q(\hat{q})$. Note that $\hat{q}=q_{\hat{c}}$.  Now we show that $\hat{q}\in \arg min_{\theta \in (-\tha, \tha)}Q(\theta)$. Take any $\theta_0\in (-\tha, \tha)$. Let $c_0$ be such that $\theta_0=q_{c_0}$ (see Remark \ref{rem2.6}). Let $x_0$ be the solution for (\ref{Qopt}) with $c$ replaced by $c_0$. By Lemma \ref{2.1xc} we have $g(x_0)=\frac{\A}{2}-\frac{(q_{c_0})^2}{2}\C$. Since $\theta_0=q_{c_0}\in (-\tha, \tha)$, we have $g(x_0)>\s$. Therefore $x_0\in S_a$. Then by Lemma \ref{lem2.14} we have $G(x_0)=e^{-\B}Q(q_{c_0})$. Since $\hat{x}$ is the optimal portfolio it is the minimizing point for the function $G(x)$ (see (\ref{arG}) for this). Therefore we have $G(\hat{x})\le G(x_0)$. This implies $Q(\hat{q})\le Q(q_{c_0})=Q(\theta_0)$. Since $\theta_0$ is arbitrary, we conclude that $\hat{q}\in \arg min_{\theta \in (-\tha, \tha)}Q(\theta)$.
\end{proof}

\begin{remark} Let us look at the case of Example \ref{ex2.2}. From the analysis in this example the optimal solution for the problem (\ref{L22}) is $x^{\star}=0$ and it is unique. Here we would like to check that this optimal portfolio $x^{\star}=0$ can also be derived from (\ref{themain}). To see this, note that in this example $\gamma=0$. Therefore we have $Q(\theta)=e^{\mathcal{C}\theta}\mathcal{L}_Z(-\frac{\theta^2}{2}\mathcal{C})$ and $q_c=-\frac{aW_0}{\mathcal{C}}c$. Observe that $0\in \{x^T(\mu-\1r_f): x\in \R^n\}$. Also for any $\theta \neq 0$ we have $Q(\theta)=+\infty$ as the CV-L of $Z\sim e^{N(0, 1)}$ is 
 $\s=0$.
Therefore $\arg min_{\theta \in \Theta }Q(\theta)$ has only one element $q_{min}=0$. Then (\ref{themain}) gives $\bar{x}^{\star}=0$ as the only optimal solution. Observe that in fact in this example we have $\A=0$ and therefore $\tha=0$. Thus $\bar{q}_{min}=\arg min_{\theta\in \{0\}}Q(\theta)=0$.
\end{remark}

\section{Large financial markets}

In the previous section we gave closed form solution for the optimal portfolio for an exponential utility maximizer in a market that contains one risk-free asset and  finitely many risky assets with return vector that follow (\ref{one}). Our Theorem \ref{2.5} gives complete characterization 
of the optimal portfolio in such small markets. 

The next natural question to ask is what happens if the consumer with exponential utility wants to increase her expected utility as much as possible by adding as many as necessary assets into her portfolio. We can best investigate this possibility
by working in mathematical models with countably infinitely
many assets. 



In this section we consider a sequence of economies with increasing number of assets. In the $n$th economy, there are $n$ risky assets and one
riskless asset. The return vector of the risky assets in the $n$th economy satisfies (\ref{one}). A consumer with exponential utility maximizes her expected utility based on the $n+1$ assets in each $n$th economy. Our main concern in this section is to investigate if the optimal expected utility of the consumer converges to a limit as $n\to\infty$
and we would like to identify this limit as
the optimizer in the market with infinitely
many assets.

Such ``stability'' of optimal investment problems
was proved in
\cite{laurence-Miklos-2021} for a wide range of
models. The methods of \cite{laurence-Miklos-2021},
however, cannot deal with exponential utilities.
So we need to apply somewhat different, new
arguments.

Our main result in this section shows that the consumer can achieve the maximum possible (in a market where she can trade on countably infinite risky assets) expected utility
by following the sequence of  optimal trading strategies in each $n$th economy, which are shown to converge to a limit (see our Lemma \ref{exi} below). We call this limit portfolio the 
``overall best optimal portfolio'' in this paper.

An economy that allows to trade on countably infinite  risky assets is called a large financial market in the literature. They serve well to
describe e.g.\ bonds of various maturities.
A first model of this type, the ``Arbitrage Pricing Model'' (APM) goes back to
\cite{Ross-1976}. We consider a slight extension of that model in the present section. As the main result of this section, we will show that the exponential utiliy maximization problem in large financial market  can be approximated by similar problems for finitely many assets (and the latter can be solved by the results of the 
previous sections).

Before we state and prove our main result of this section, we first specify the structure of our $n$th economy for all $n$. Return on the bank account is $R_{0}:=r_{f}$ where $r_{f}\geq 0$ is the risk-free interest rate.
For simplicity we assume $r_{f}=0$ henceforth.
For $i=1$, $R_{1} := \gamma_{1}Z+\mu_{1}+\bar{\beta}_1\sqrt{Z}\varepsilon_1$ is the return on the ``market portfolio'',
which may be thought of as an investment into an index.
For $i\geq 2$, let the return on risky asset $i$ be given by
\begin{eqnarray}\label{returnss}
R_i = \gamma_{i}Z+\mu_{i}+\beta_i\sqrt{Z}\varepsilon_1+\bar{\beta}_i\sqrt{Z}\varepsilon_i.
\end{eqnarray}
Here the $(\varepsilon_i)_{i \geq 1}$ are assumed independent standard Gaussian, $Z$ is a positive random variable,
independent of the $\varepsilon_{i}$, $\beta_{i}$, $i\geq 2$, $\bar{\beta}_{i}\neq 0,\gamma_{i},\mu_{i}$, 
$i\geq 1$ are constants.
The classical APM corresponds to $Z\equiv 1$. 
We refer to \cite{Ross-1976} for further discussions on that model.

We consider
investment strategies in finite market segments. A strategy investing in the first $n$
assets is a sequence of numbers $\phi_{0},\phi_{1},\ldots,\phi_{n}$. For simplicity, we assume $0$ initial
capital and also that every asset has price $1$ at time $0$. Self-financing imposes $\sum_{i=0}^{n}\phi_{i}=0$
so a strategy is, in fact, described by $\phi_{1},\ldots,\phi_{n}$ which can be \emph{arbitrary} real numbers.
The return on the portfolio $\phi$ is thus
$$
V(\phi)=\sum_{i=1}^{n}\phi_{i}R_{i},
$$
noting also that $R_{0}=0$ is assumed. 

For utility maximization to be well-posed, one should assume a certain arbitrage-free property for the market.
Notice that a probability $Q_{n}\sim P$ is a martingale measure 
for the first $n$ assets (that is, $E_{Q_{n}}[R_{i}]=0$ for all $1\leq i\leq n$)
provided that
\begin{equation}\label{melo1}
E_{Q_{n}}[\varepsilon_{1}|Z=z]=b_{1}(z):=-\frac{\gamma_{1}\sqrt{z}}{\bar{\beta}_1}-\frac{\mu_{1}}{\sqrt{z}\bar{\beta}_1},\ z\in (0,\infty) 
\end{equation}
and, for each $i\geq 2$,
\begin{equation}\label{melo2}
E_{Q_{n}}[\varepsilon_{i}|Z=z]=b_{i}(z):=-\frac{\gamma_{i}\sqrt{z}}{\bar{\beta}_i}-\frac{\mu_{i}}{\sqrt{z}\bar{\beta}_i}
-\frac{\beta_{i}b_{1}(z)\sqrt{z}}{\bar{\beta}_{i}}
,\ z\in (0,\infty).
\end{equation}

Now notice that, in fact, 
the set of such $V(\phi)$ coincides with the set of $$
V(h):=\sum_{i=1}^{n}h_{i}\sqrt{Z}(\varepsilon_{i}-b_{i}(Z))
$$
where $h_{1},\ldots,h_{n}$ are arbitrary real numbers. We denote by $H_{n}$ the set of all $n$-tuples $(h_{1},\ldots,h_{n})$.
It is more convenient to use this ``$h$-parametrization'' in the sequel.

\begin{assumption}\label{zbounded}
There are $0<c<C$ such that $c\leq Z\leq C$.	
\end{assumption}

Let us define $d_{i}:=\sup_{z\in [c,C]}|b_{i}(z)|$, $i\geq 1$.
The next assumption is similar in spirit to the no-arbitrage condition derived in \cite{Ross-1976}, see also \cite{Miklos-2004}.

\begin{assumption}\label{csummable} We stipulate
$
\sum_{i=1}^{\infty}d_{i}^{2}<\infty.
$
\end{assumption}

\noindent\textbf{Fact.} If $X$ is standard normal then $E[e^{-\theta X-\theta^{2}/2}]=1$ and
$E[Xe^{-\theta X-\theta^{2}/2}]=\theta$, for all $\theta\in\mathbb{R}$.
Notice also that, for all $p\geq 1$,
\begin{equation}\label{negyzet}
E[e^{-p\theta X-p\theta^{2}/2}]=e^{(p^{2}-p)\theta^{2}/2}.
\end{equation}

Let us now define 
$$
f_{n}(z):=\exp\left(-\sum_{i=1}^{n}[b_{i}(z)\varepsilon_{i}+b_{i}(z)^{2}]\right).
$$

Clearly, $E[f_{n}(z)]=1$ and $E[f_{n}(z)\varepsilon_{i}]=b_{i}(z)$ for $i=1,\ldots,n$. Then $Q_{n}$
defined by $dQ_{n}/dP:=f_{n}(Z)$ will be a martingale measure for the first $n$ assets. Indeed,
$$
E[f_{n}(Z)]=\int_{[c,C]}E[f_{n}(z)]\, \mathrm{Law}(Z)(dz)=1
$$
and
$$
E[f_{n}(Z)\varepsilon_{i}|Z=z]=E[(\varepsilon_{i}-b_{i}(z))e^{-b_{i}(z) \varepsilon_{i}-b_{i}(z)^{2}/2}]=0,\ 1\leq i\leq n.
$$
It follows from \eqref{negyzet} and from Assumption \ref{csummable} that $\sup_{n}E[(dQ_{n}/dP)^{2}]<\infty$ hence
$dQ/dP:=\lim_{n\to\infty}dQ_{n}/dP$ exists almost surely and in $L^{2}$, and this is a martingale
measure for \emph{all} the assets, that is, $E_{Q}[R_{i}]=0$ for all $i\geq 1$. 
Note also that $E[(dQ/dP)^{2}]<\infty$.

Using the previous sections, we may find $h^{*}_{n}\in H_{n}$ such that
$$
U_{n}:=E[e^{-V(h^{*}_{n})}]=\min_{h\in H_{n}}E[e^{-V(h)}].
$$
If we wish to find (asymptotically) optimal strategies for this large financial market then we also
need to verify that $U_{n}\to U:=\inf_{h\in \cup_{n\geq 1}H_{n}}E[e^{-V(h)}]$ as $n\to\infty$.{}

Let us introduce 
$$\ell_2:=\left\{(h_i)_{i\geq 1}, \, h_{i}\in\mathbb{R},\, i\geq 1,\,  \sum_{i=1}^{\infty}h_i^2<\infty\right\}$$ 
which is a Hilbert space with the norm $||h||_{\ell_2}:=\sqrt{\sum_{i=1}^{\infty}h_i^2}$.
We may and will identify each $(h_{1},\ldots,h_{n})\in H_{n}$ with $(h_{1},h_{2},\ldots)\in\ell_{2}$ for all $n\geq 1$.
Also define $d:=(d_{1},d_{2},\ldots)\in\ell_{2}$. 

\begin{theorem}\label{mainu} Under Assumptions \ref{zbounded} and \ref{csummable}, one has $U_{n}\to U$, $n\to\infty$.{}
\end{theorem}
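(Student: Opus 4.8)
The plan is to reduce the whole statement to the analysis of the single functional $\Phi(h):=E[e^{-V(h)}]$, for which conditioning on $Z$ produces a closed form. First I would condition on $\{Z=z\}$: since the $\varepsilon_i$ are independent standard Gaussians independent of $Z$, the series $\sum_i h_i\varepsilon_i$ is centred Gaussian with variance $\|h\|_{\ell_2}^2$, so conditionally on $Z=z$ the variable $-V(h)$ is Gaussian with mean $\sqrt z\,\langle h,b(z)\rangle$ and variance $z\|h\|_{\ell_2}^2$, where $b(z)=(b_i(z))_{i\geq 1}$. The Gaussian Laplace-transform identity recorded in the Fact then gives
\begin{equation}
\Phi(h)=E\!\left[\exp\!\Big(\sqrt Z\,\langle h,b(Z)\rangle+\tfrac12 Z\,\|h\|_{\ell_2}^2\Big)\right].
\end{equation}
Assumption \ref{csummable} guarantees $b(z)\in\ell_2$ with $\|b(z)\|_{\ell_2}\leq\|d\|_{\ell_2}$ for $z\in[c,C]$, so the inner product is well defined for every $h\in\ell_2$, and Assumption \ref{zbounded} makes the integrand bounded; in particular $\Phi$ is finite on all of $\ell_2$, not merely on $\cup_n H_n$. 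The lower bound $Z\geq c>0$ also yields $\Phi(h)\geq\exp\big(\tfrac{c}{2}\|h\|_{\ell_2}^2-\sqrt C\,\|h\|_{\ell_2}\|d\|_{\ell_2}\big)$, which is coercive in $\|h\|_{\ell_2}$ and justifies that the finite-dimensional minima $U_n=\min_{h\in H_n}\Phi(h)$ are actually attained.

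Second, I would prove that $\Phi$ is continuous on $\ell_2$. For $z\in[c,C]$ the exponent is bounded above by $\sqrt C\,\|h\|_{\ell_2}\|d\|_{\ell_2}+\tfrac{C}{2}\|h\|_{\ell_2}^2$, so on any $\ell_2$-ball the integrands are uniformly bounded by a constant, which is integrable against $\mathrm{Law}(Z)$ precisely because that law is supported on $[c,C]$. If $h^{(k)}\to h$ in $\ell_2$, then $\langle h^{(k)},b(z)\rangle\to\langle h,b(z)\rangle$ by Cauchy--Schwarz (uniformly in $z$, with error $\leq\|h^{(k)}-h\|_{\ell_2}\|d\|_{\ell_2}$) and $\|h^{(k)}\|_{\ell_2}\to\|h\|_{\ell_2}$, so the integrands converge pointwise in $z$; dominated convergence then gives $\Phi(h^{(k)})\to\Phi(h)$. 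This is the step that genuinely uses both assumptions, and it is where I expect the only real care to be required.

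Third, I would assemble the conclusion. The sets $H_n$ are nested, so $U_n$ is non-increasing; and $H_n\subset\cup_m H_m$ gives $U_n\geq U$, whence $U_n\downarrow L\geq U$. Conversely, by the definition of the infimum, for every $\varepsilon>0$ there exist $N$ and $h\in H_N$ with $\Phi(h)<U+\varepsilon$, so $L\leq U_N\leq\Phi(h)<U+\varepsilon$; letting $\varepsilon\to0$ yields $L\leq U$, and therefore $U_n\to U$. Continuity of $\Phi$ together with the density of $\cup_n H_n$ in $\ell_2$ shows, as a byproduct, that $U=\inf_{h\in\ell_2}\Phi(h)$, i.e.\ that the infimum over genuinely infinite strategies coincides with the limit of the finite-segment optima. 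This identification is the quantity of real interest, since it is what licenses the interpretation that finite market segments only approximate, and never attain, the optimum of the full market.
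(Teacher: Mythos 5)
Your proof is correct, but it follows a genuinely different route from the paper's. The paper never computes the objective in closed form: it first constructs the equivalent martingale measure $Q$ as the $L^{2}$ limit of the densities $dQ_{n}/dP$, proves the non-degeneracy estimate of Lemma \ref{rexi} by a contradiction argument under $Q$, deduces via Banach--Saks and Fatou that the infimum is attained at some $h^{*}\in\ell_{2}$ (Lemma \ref{exi}), and finally proves the theorem by truncating $h^{*}$, combining $V(\tilde{h}_{n})\to V(h^{*})$ a.s.\ with the uniform integrability bound $\sup_{n}E[e^{-2V(\tilde{h}_{n})}]<\infty$. You instead exploit the conditionally Gaussian structure to obtain the explicit formula
\begin{equation*}
\Phi(h)=E\Big[\exp\Big(\sqrt{Z}\,\langle h,b(Z)\rangle+\tfrac{1}{2}Z\|h\|_{\ell_2}^{2}\Big)\Big],
\end{equation*}
from which coercivity (replacing Lemma \ref{rexi} and the whole martingale-measure contradiction argument) and continuity of $\Phi$ on $\ell_{2}$ (replacing the paper's truncation/uniform-integrability step) follow in one line each; the convergence $U_{n}\to U$ is then immediate from nestedness of the $H_{n}$, and continuity plus density of $\cup_{n}H_{n}$ in $\ell_{2}$ gives the identification $U=\inf_{h\in\ell_{2}}\Phi(h)$. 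You are also right to observe that, with $U$ defined literally as $\inf_{h\in\cup_{n}H_{n}}E[e^{-V(h)}]$, the stated convergence is already a triviality about monotone infima, and that the substantive content is precisely this identification with the full-market optimum---which the paper establishes in the stronger form $U=\Phi(h^{*})$. That attainment is the one thing your argument does not deliver: the paper produces an actual optimizer $h^{*}$ in the large market (whose truncations are then asymptotically optimal strategies), whereas you only identify the optimal value; if you wanted attainment, your coercivity bound together with convexity of $\Phi$ (hence weak lower semicontinuity on the reflexive space $\ell_{2}$) would recover it without Banach--Saks. Conversely, your computation is tied to exponential utility and the normal mean-variance mixture structure, while the paper's functional-analytic route would survive in settings where no closed form for the conditional Laplace transform is available.
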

\begin{proof}
It follows from Lemma \ref{exi} below that there is $\bar{h}^{*}\in\ell_{2}$ such that $U=E[e^{-V(\bar{h}^{*})}]$.
Define now $\tilde{h}_{n}:=(\bar{h}^{*}_{1},\ldots,\bar{h}^{*}_{n})\in H_{n}$.
It is clear that $U_{n}\geq U$ and $E[e^{-V(\tilde{h}_{n})}]\geq U_{n}$ for all $n\geq 1$.
Hence it remains to establish $E[e^{-V(\tilde{h}_{n})}]\to U$. 

Noting that $V(\tilde{h}_{n})\to V(h^{*})$ almost surely, 
it suffices to show that $\sup_{n\in\mathbb{N}}E[e^{-2V(\tilde{h}_{n})}]<\infty$.  	
This follows from 
\begin{eqnarray*}
E[e^{-2V(\tilde{h}_{n})}]\leq e^{2\sqrt{C}||\tilde{h}_{n}||_{2}||d||_{2}}E[e^{2\sqrt{C}||\tilde{h}_{n}||_{2}|N|}]
\leq e^{2\sqrt{C}||h^{*}||_{2}||d||_{2}}E[e^{2\sqrt{C}||h^{*}||_{2}|N|}],	
\end{eqnarray*}
where $N$ is a standard normal random variable.
\end{proof}

\begin{lemma}\label{rexi}
There exists $\alpha>0$, such that for all $h \in \ell_2$ with $\|h\|_{\ell_2}=1,$ 
 $P(V(h)\leq -\alpha) \geq \alpha$ holds.
\end{lemma}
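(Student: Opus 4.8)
The plan is to condition on $Z$ and exploit the fact that, given $Z$, the random variable $V(h)$ is Gaussian. Fix $h\in\ell_2$ with $\|h\|_{\ell_2}=1$. Conditionally on $\{Z=z\}$ with $z\in[c,C]$, the $(\varepsilon_i)_{i\ge 1}$ remain independent standard Gaussian while each $b_i(z)$ is a deterministic constant, so
\[
V(h)=\sqrt{z}\sum_{i=1}^{\infty}h_i\big(\varepsilon_i-b_i(z)\big)
\]
is an $L^2$-limit of Gaussians, hence Gaussian, with conditional mean $\mu(z):=-\sqrt{z}\sum_{i=1}^{\infty}h_i b_i(z)$ and conditional variance $z\sum_{i=1}^{\infty}h_i^2=z$, the last equality using $\|h\|_{\ell_2}=1$. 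The series defining $\mu(z)$ converges absolutely by Cauchy--Schwarz, since $(b_i(z))_i\in\ell_2$ for $z\in[c,C]$ by Assumption \ref{csummable}.

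The key step is a bound on the conditional mean that is uniform in both $h$ and $z$. By Cauchy--Schwarz, the definition $d_i=\sup_{z\in[c,C]}|b_i(z)|$, Assumption \ref{csummable}, and Assumption \ref{zbounded},
\[
|\mu(z)|\le \sqrt{z}\,\|h\|_{\ell_2}\Big(\sum_{i=1}^{\infty}b_i(z)^2\Big)^{1/2}\le \sqrt{C}\,\|d\|_{\ell_2}=:M,
\]
a finite constant independent of $h$ and $z$. Since $V(h)\mid Z=z$ is $\mathcal{N}(\mu(z),z)$ with $z\ge c$, writing $\Phi$ for the standard normal distribution function, for any $\alpha>0$ we get
\[
P\big(V(h)\le -\alpha\mid Z=z\big)=\Phi\!\left(\frac{-\alpha-\mu(z)}{\sqrt{z}}\right)\ge \Phi\!\left(\frac{-\alpha-M}{\sqrt{c}}\right),
\]
where the inequality uses $\mu(z)\le M$, $z\ge c$, and the monotonicity of $\Phi$. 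Integrating against the law of $Z$ (supported in $[c,C]$) gives the uniform estimate $P(V(h)\le -\alpha)\ge \Phi\big((-\alpha-M)/\sqrt{c}\big)$ for every unit vector $h$.

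It then remains only to fix $\alpha$. The right-hand side above is continuous in $\alpha$ and equals $\Phi(-M/\sqrt{c})>0$ at $\alpha=0$, so there is $\alpha_0>0$ with $\Phi\big((-\alpha-M)/\sqrt{c}\big)\ge \alpha$ for all $\alpha\in(0,\alpha_0]$; any such $\alpha$ yields $P(V(h)\le -\alpha)\ge \alpha$ simultaneously for all $h$ on the unit sphere, which is the claim.

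I expect the main point to be conceptual rather than computational: recognizing that uniformity over the infinite-dimensional unit sphere is built into the structure here. On $\{\|h\|_{\ell_2}=1\}$ the conditional variance is exactly $z$, with no dependence on $h$, and Assumption \ref{csummable} is precisely what caps the conditional mean uniformly through $M=\sqrt{C}\,\|d\|_{\ell_2}$. Without the summability of $(d_i^2)$ the conditional mean could be made arbitrarily large along suitable directions $h$, pushing $\Phi\big((-\alpha-\mu(z))/\sqrt{z}\big)$ to $0$ and destroying the uniform lower bound; so this is the step where the hypotheses genuinely enter.
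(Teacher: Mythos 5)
Your proof is correct, but it takes a genuinely different route from the paper's. The paper argues by contradiction, in the style of a compactness argument borrowed from Proposition 3.2 of \cite{laurence-Miklos-2021}: assuming unit vectors $g_n$ with $P(V(g_n)\le -1/n)\le 1/n$, it brings in the equivalent martingale measure $Q$ constructed before Theorem \ref{mainu} (whose density is in $L^2$), uses $E_Q[V(g_n)]=0$, Cauchy--Schwarz and uniform integrability to force $E[V(g_n)^2]\to 0$, and contradicts the lower bound $E[V(g_n)^2]\ge E[Z]>0$. You instead prove the estimate directly: conditionally on $Z=z$, $V(h)$ is Gaussian with variance exactly $z\ge c$ (independent of $h$ on the unit sphere) and mean bounded in absolute value by $M=\sqrt{C}\,\|d\|_{\ell_2}$ via Assumptions \ref{zbounded} and \ref{csummable}, giving the uniform tail bound $P(V(h)\le-\alpha\mid Z=z)\ge \Phi((-\alpha-M)/\sqrt{c})$, after which a continuity argument at $\alpha=0$ produces an admissible $\alpha$. (Your key inequality $\Phi((-\alpha-\mu(z))/\sqrt{z})\ge\Phi((-\alpha-M)/\sqrt{c})$ silently uses a case distinction on the sign of the numerator $-\alpha-\mu(z)$, since enlarging the denominator increases the ratio only when the numerator is negative; both cases work, so this is routine.) As for what each approach buys: yours is self-contained and quantitative --- it yields an explicit choice of $\alpha$, never invokes $Q$ or the martingale machinery, and applies verbatim to every unit vector of $\ell_2$, whereas the paper's contradiction is phrased for finitely supported $g_n$ and strictly speaking needs a density remark to cover all of $\ell_2$. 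The paper's argument, in turn, exploits the Gaussian structure only through the existence of an equivalent martingale measure with square-integrable density, so it would survive replacing the $\varepsilon_i$ by more general noise; your argument leans essentially on the exact conditional normality of $V(h)$, which is precisely what the NMVM structure provides here.
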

\begin{proof} We follow closely the proof of Proposition 3.2 in \cite{laurence-Miklos-2021}, see also
\cite{carassus2020risk}. 
We argue by contradiction.
Assume that for all $n\geq 1$, there is $g_{n}=(g_{n}(1),g_{n}(2),\ldots)\in \cup_{n\geq 1}H_{n}$ with $\|g_{n}\|_{\ell_2}=1$ and 
$P\left(V(g_{n}) \leq-{1}/{n}\right) \leq {1}/{n}$. \\Clearly, $V(g_{n})^{-}\to 0$ in probability as $n\to\infty$. 
We claim that $E_{Q}[V(g_{n})^{-}]\to 0$. By the Cauchy-Schwarz inequality 
\begin{eqnarray*}
E_{Q}[V(g_{n})^{-}] & \leq &  \|dQ/dP\|_{L^2(P)} \left(E[(V(g_{n})^-)^{2}]\right)^{1/2}.
\end{eqnarray*}
However, \begin{equation}\label{rofa}
V(g_{n})^{-}\leq |V(g_{n})|\leq \sqrt{C}[|N|+||d||_{2}]
\end{equation} 
for some standard normal $N$. This implies $E[(V(g_{n})^-)^{2}]$, $n\to\infty$ and hence our claim.

Since
$E_{Q}[V(g_{n})]=0$ by the martingale measure property of $Q$, we also get that $E_{Q}[V(g_{n})^{+}]\to 0$. 
It follows that
$E_{Q}[|V(g_{n})|]\to 0,$
hence $V(g_{n})$ goes to zero $Q$-a.s. (along a subsequence) and, as $Q$ is equivalent to $P$, $P$-a.s.  
Using that $|V(g_{n})|^{2}$, $n\in\mathbb{N}$ is 
uniformly $P$-integrable by \eqref{rofa}, we get $E[V(g_{n})^{2}]\to 0.$ 
An auxiliary calculation gives
$$
E[V(g_{n})^{2}]=\Vert g_{n}\Vert_{\ell_2}^{2}E[Z]+ \sum_{i=1}^{\infty}g^{2}_{n}(i)E[b_{i}^{2}(Z)Z]\geq E[Z]>0,
$$
a contradiction showing our lemma.
\end{proof}

\begin{lemma}\label{exi}
There is $h^{*}\in\ell_{2}$ such that $U=E[e^{-V(h^{*})}]$.	
\end{lemma}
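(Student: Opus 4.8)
The plan is to recognize $J(h):=E[e^{-V(h)}]$ as a convex, coercive, weakly lower semicontinuous functional on the Hilbert space $\ell_{2}$, to extract a minimizer by the direct method, and finally to reconcile the minimum over $\ell_{2}$ with the infimum $U$ taken over $\cup_{n\geq 1}H_{n}$ by truncation. First I would extend $V$ to all of $\ell_{2}$. From the auxiliary identity established in the proof of Lemma \ref{rexi}, namely $E[V(g)^{2}]=\|g\|_{\ell_2}^{2}E[Z]+\sum_{i=1}^{\infty}g_{i}^{2}E[b_{i}^{2}(Z)Z]$, together with Assumption \ref{zbounded} (so $Z\leq C$), the bound $b_{i}^{2}(z)\leq d_{i}^{2}$ on $[c,C]$, and the fact that $\sum_{i}d_{i}^{2}<\infty$ forces $\sup_{i}d_{i}<\infty$, one obtains $E[V(g)^{2}]\leq K\|g\|_{\ell_2}^{2}$ for a finite constant $K$. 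Hence $h\mapsto V(h)$ is a bounded linear map $\ell_{2}\to L^{2}(P)$, so $V(h)$ is well-defined for every $h\in\ell_{2}$ and agrees with the finite sum on each $H_{n}$.

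Next I would record the three structural properties of $J$. Convexity is immediate, since $V$ is linear and $x\mapsto e^{-x}$ is convex. Coercivity is exactly where Lemma \ref{rexi} enters: writing $h=\|h\|_{\ell_2}\,u$ with $\|u\|_{\ell_2}=1$, on the event $\{V(u)\leq-\alpha\}$, which has probability at least $\alpha$, we have $V(h)\leq-\alpha\|h\|_{\ell_2}$, whence $J(h)\geq\alpha\,e^{\alpha\|h\|_{\ell_2}}\to\infty$ as $\|h\|_{\ell_2}\to\infty$. Finally I would verify continuity of $J$ on $\ell_{2}$: if $h^{(k)}\to h$ strongly then $V(h^{(k)})\to V(h)$ in $L^{2}(P)$, hence in probability, while the estimate $E[e^{-2V(h)}]\leq e^{2\sqrt{C}\|h\|_{\ell_2}\|d\|_{2}}E[e^{2\sqrt{C}\|h\|_{\ell_2}|N|}]$ from the proof of Theorem \ref{mainu} shows that $\{e^{-V(h^{(k)})}\}$ is bounded in $L^{2}$ along a bounded sequence, hence uniformly integrable, so $J(h^{(k)})\to J(h)$. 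A convex functional that is strongly continuous (in particular strongly lower semicontinuous) is weakly lower semicontinuous.

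Now I would run the direct method. Since $J(0)=1<\infty$ we have $U<\infty$, so pick a minimizing sequence $h^{(k)}\in\cup_{n}H_{n}$ with $J(h^{(k)})\to U$. By coercivity $\{h^{(k)}\}$ is bounded in $\ell_{2}$, so by the Hilbert-space structure a subsequence converges weakly, $h^{(k)}\rightharpoonup h^{*}\in\ell_{2}$. Weak lower semicontinuity gives $J(h^{*})\leq\liminf_{k}J(h^{(k)})=U$. For the reverse inequality I would approximate $h^{*}$ by its truncations $h^{*,(n)}:=(h^{*}_{1},\ldots,h^{*}_{n},0,0,\ldots)\in H_{n}$; each satisfies $J(h^{*,(n)})\geq U$, and since $h^{*,(n)}\to h^{*}$ strongly in $\ell_{2}$, continuity of $J$ yields $J(h^{*})=\lim_{n}J(h^{*,(n)})\geq U$. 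Combining the two inequalities gives $U=E[e^{-V(h^{*})}]$, which is the claim.

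I expect the genuine obstacle to be the passage from weak convergence to the bound $J(h^{*})\leq U$: because $J$ is not weakly continuous, one cannot simply take limits, and one must lean on the convexity of $J$ together with its strong lower semicontinuity to secure weak lower semicontinuity (equivalently, via Mazur's lemma). The secondary technical point is the integrability bookkeeping justifying strong continuity of $J$, i.e.\ the uniform control of $E[e^{-V(h^{(k)})}]$ through the second-moment and Gaussian estimates; this, however, only reuses the machinery already present in the proofs of Lemma \ref{rexi} and Theorem \ref{mainu}.
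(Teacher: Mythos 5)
Your proof is correct, but it takes a genuinely different route from the paper's. The paper also starts from a minimizing sequence in $\cup_{n}H_{n}$ and uses Lemma \ref{rexi} exactly as you do (your coercivity bound $J(h)\geq \alpha e^{\alpha\|h\|_{\ell_2}}$ appears there as a proof by contradiction), but it then invokes the Banach--Saks theorem to produce convex combinations $\bar{h}_{n}$ of the minimizing sequence converging \emph{strongly} in $\ell_{2}$ to some $h^{*}$, and concludes $E[e^{-V(h^{*})}]\leq U$ by Fatou's lemma together with convexity of the exponential; that route needs no weak topology and no uniform integrability. You instead run the direct method: weak sequential compactness of bounded sets in $\ell_{2}$ plus weak lower semicontinuity of $J$, which you derive from convexity and strong continuity via Mazur; the strong continuity in turn costs you the $L^{2}$/uniform-integrability estimate borrowed from the proof of Theorem \ref{mainu}. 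The two mechanisms are cousins (Banach--Saks is one standard way to obtain Mazur-type weak lower semicontinuity), so neither is more elementary; the paper's Fatou argument is leaner, while your version delivers strong continuity of $J$ as a by-product. That by-product makes your proof more complete on one point: you explicitly establish the reverse inequality $J(h^{*})\geq U$ by truncating $h^{*}$ to $H_{n}$ and passing to the limit, whereas the paper's proof of the lemma only establishes $E[e^{-V(h^{*})}]\leq U$ and stops, the reverse inequality being supplied only implicitly by the truncation/uniform-integrability argument inside the proof of Theorem \ref{mainu}. One small caveat: the auxiliary identity for $E[V(g)^{2}]$ that you quote from the proof of Lemma \ref{rexi} drops the cross terms $g_{i}g_{j}E[Zb_{i}(Z)b_{j}(Z)]$, $i\neq j$, which need not vanish; this does not affect your argument, since the bound $E[V(g)^{2}]\leq K\|g\|_{\ell_2}^{2}$ that you actually need follows directly from the Cauchy--Schwarz inequality together with Assumptions \ref{zbounded} and \ref{csummable}.
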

\begin{proof}
There are $h_{n}\in \cup_{j\in\mathbb{N}}H_{j}$, $n\in\mathbb{N}$ such that
$E[e^{-V(h_{n})}]\to U$. If we had $\sup_{n}||h_{n}||_{\ell_{2}}=\infty$ then
(taking a subsequence still denoted by $n$), $||h_{n}||_{\ell_{2}}\to\infty$, $n\to\infty$.
By Lemma \ref{rexi},  $$P(V(h_{n})\leq -\alpha ||h_{n}||_{\ell_{2}})\geq \alpha$$
and this implies $E[e^{-V(h_{n})}]\to\infty$, which contradicts $E[e^{-V(h_{n})}]\to U\leq E[e^{0}]=1$.

Then necessarily $\sup_{n}||h_{n}||_{\ell_{2}}<\infty$ and the Banach-Saks theorem
implies that convex combinations $\bar{h}_{n}$ of the $h_{n}$ converge to some $h^{*}\in\ell_{2}$
(in the norm of $\ell_{2}$).
By Fatou's lemma, 
$$
E[e^{-V(h^{*})}]\leq \liminf_{n\to\infty}E[e^{-V(\bar{h}_{n})}]\leq \liminf_{n\to\infty}E[e^{-V({h}_{n})}]
= U,$$
using also convexity of the exponential function. This proves the statement.
\end{proof}

\section{Applications and examples}

Our Theorem \ref{2.5} gives closed form expression for the optimal portfolios for the problem (\ref{L22}) by using the function $Q(\theta)$ defined in (\ref{H}). In this section, we first study some properties of this function. Then we present some examples. 

Let $\M_Z(s)=Ee^{sZ}$ and $\K_Z(s)=\ln \M_Z(s)$ denote the moment generating function (MGF) and the cumulant generating function (KGF) of the mixing distribution $Z$ respectively. We have the following obvious relation
\[
Q(\theta)=e^{\C\theta}\M_Z(\frac{\C}{2}\theta^2-\frac{\A}{2}), \; \; \ln Q(\theta)=\C\theta+\K_Z(\frac{\C}{2}\theta^2-\frac{\A}{2}). 
\]
Therefore the minimizing points of $Q(\theta)$ in (\ref{32q}) can also be found by using the MGF or KGF of $Z$. In the following Lemma we state some properties of the function $Q(\theta)$.

\begin{lemma}\label{HH} Consider the model (\ref{one}) with a non-trivial mixing distribution $Z$. Let $\s$ denote the CV-L of $Z$ and $\tha$ is defined as in Section 2. Let the function $Q(\theta)$ be defined by (\ref{H}). Assume our model (\ref{one}) is such that either $\A\neq 0$ or  $\s\neq 0$ which insures $\tha=\sqrt{(\A-2\s)/\C}\neq 0$ and hence $(-\tha, \tha)$ is a non-empty open interval. Then we have the following.
\begin{itemize}
\item [a)] The function $Q(\theta)$ is infinitely differentiable on $(-\tha, \tha)$. If $\s$ is finite and $\LL_Z(\s)=+\infty$ or if $\s=-\infty$, we have
\begin{equation}\label{Qinf}
\lim_{\theta \rightarrow \tha^-}Q(\theta)=+\infty,\;\;\; \lim_{\theta \rightarrow -\tha^+}Q(\theta)=+\infty.
\end{equation}
When $\s$ is finite and $\LL_Z(\s)<\infty$ we have $Q(\tha)<\infty$ and $Q(-\tha)<\infty$. When $\s$ is finite and $\theta \notin [-\tha, \tha]$ we have $Q(\theta)=+\infty$.

\item [b)] The function $Q(\theta)$ is  strictly increasing on $[0, \tha]$ when $\s$ is finite. It is
 strictly increasing on $[0, +\infty)$ when $\s=-\infty$. We have $Q'(0)\neq 0$  which implies the $q_{min}$ in (\ref{themain}) can not be zero under the stated conditions in the Lemma.

\item[c)] The function $Q(\theta)$ is strictly convex on the open interval $(-\tha, \tha)$ when $\s$ is finite and $\LL(\s)=+\infty$ or when $\s=-\infty$. $Q(\theta)$ is strictly convex on $[-\tha, \tha]$ when $\s$ is finite and $\LL(\s)<\infty$. 
\end{itemize}

\end{lemma}
\begin{proof} a) It is sufficient to prove that the function $\theta \rightarrow \LL_Z(\frac{\A}{2}-\frac{\C}{2}\theta^2)$ is infinitely differentiable when $\theta \in (-\tha, \tha)$. This function is a composition of two functions $s\rightarrow \LL_Z(s)$ and $\theta \rightarrow  \frac{\A}{2}-\frac{\C}{2}\theta^2$. So it is sufficient to prove the infinite differentiability of $s\rightarrow \LL_Z(s)$ in the corresponding domain. If $\LL_Z(s)$ is $k$'th order differentiable then we would have $\LL_Z^{(k)}(s)=(-s)^kE[Z^ke^{-sZ}]$. To justify the change of the order of derivative with expectation for this we need to show $E[Z^ke^{-sZ}]<\infty$. Let us look at the case $\s\neq 0$ first. In this case  we have $Ee^{sZ}<\infty$ in $(-\infty, |\s|)$. Thus all the moments of $Z$ are finite. This implies $E[Z^ke^{-sZ}]<\infty$ for any positive integer $k$ and all $s\in (\s, +\infty)$. If $\theta\in (-\tha, \tha)$, then $\frac{\A}{2}-\frac{\C}{2}\theta^2 \in (\s, \frac{\A}{2})$. Therefore when $\s\neq 0$, the infinite differentiability of $Q(\theta)$ follows. Now let us look at the case $\s=0$. In this case $\tha=\sqrt{\frac{\A}{\C}}$ and for any $\theta\in (-\tha, \tha)$ we have $\frac{\A}{2}-\frac{\C}{2}\theta^2\in (0, \frac{\A}{2})$. Therefore it is sufficient to prove infinite differentiability of $\LL_Z(s)$ on $(0, \frac{\A}{2})$. Fix an arbitrary positive integer $k$.
When $s\in (0, \frac{\A}{2})$ we have $Z^k/e^{sZ}=(Z^k/e^{sZ})1_{\{Z\le M\}}+(Z^k/e^{sZ})1_{\{Z>M\}}$ for any positive number $M$. For sufficiently large $M=M_0$, we have $(Z^k/e^{sZ})1_{\{Z>M_0\}}\le 1$ and  $Z^k/e^{sZ}=(Z^k/e^{sZ})1_{\{Z\le M_0\}}$ is a bounded random variable. Thus $E(Z^ke^{-sZ})<\infty$ for any positive integer $k$ when  $s\in (0, \frac{\A}{2})$. This shows that 
$\theta \rightarrow  \LL_Z(\frac{\A}{2}-\frac{\C}{2}\theta^2)$ is infinitely differentiable when $\s=0$ also.

When $\s$ is finite and when $\theta \rightarrow \tha$ from the left-hand-side or when $\theta \rightarrow -\tha$ from the right-hand-side, the function  $\frac{\A}{2}-\frac{\C}{2}\theta^2$ decreasingly converges to $\s$ (in some neighborhood of $\s$). Then the monotone convergence theorem gives the claim (\ref{Qinf}). 
Now assume $\s=-\infty$ which happens when the mixing distribution $Z$ is a bounded non-trivial random variable. The result $\lim _{\theta \rightarrow +\infty}Q(\theta)=+\infty$ is clear as both $e^{\C \theta}$ and $\LL_Z(\frac{\A}{2}-\frac{\theta^2}{2}\C)$  go to $+\infty$. The limit $\lim _{\theta \rightarrow -\infty}Q(\theta)=+\infty$ is less clear as $e^{\C \theta}\rightarrow 0$ and $\LL_Z(\frac{\A}{2}-\frac{\theta^2}{2}\C)\rightarrow +\infty$ in this case. But since $Z\neq 0$ with positive probability, we have a positive number $\delta>0$ with $P(Z\geq \delta)>0$. We have the following
\begin{equation}\label{Qdelta}
Q(\theta)=Ee^{[\frac{\C}{2}\theta^2-\frac{\A}{2}]Z+\C\theta}\geq e^{[\frac{\C}{2}\theta^2-\frac{\A}{2}]\delta +\C\theta}P(Z\geq \delta),
\end{equation}
for all $\theta$ with $\frac{\C}{2}\theta^2-\frac{\A}{2}>0$. Then, since the right-hand-side of (\ref{Qdelta}) goes to $+\infty$ when $\theta \rightarrow -\infty$, the claim follows. The remaining property of $Q$ in part a) above is obvious by the definition of $\tha$.

b) For any $\theta\in (-\tha, \tha)$ we have
\begin{equation}\label{FOCQ}
Q'(\theta)=\C e^{\C\theta}\mathcal{L}_Z[\frac{\A}{2}-\frac{\theta^2}{2}\C]-\theta \C e^{\C\theta}\mathcal{L}'_Z[\frac{\A}{2}-\frac{\theta^2}{2}\C].
\end{equation}
Observe that $0\in (-\tha, \tha)$ always (in both cases $\s\neq 0$ and $\s=0$). Therefore $Q'(0)$ always exists and from (\ref{FOCQ}) we see that $Q'(0)\neq 0$. Now since $\mathcal{L}_Z(s)$ is a strictly decreasing function  we have $\mathcal{L}'_Z(s)<0$. Therefore 
$Q'(\theta)$ is finite and 
$Q'(\theta)>0$ when $\theta \in (0, \tha)$. At $\theta=0$, we have $Q(0)=\C \LL_Z(\A/2)$ and clearly we have $Q(0)<Q(\theta)$ for all $\theta \in (0, \tha)$. At $\theta=\tha$, we have $Q(\theta)= \LL_Z(\s)$ which is either $+\infty$ or finite. When it is finite we have $Q(\theta)<Q(\tha)$ for all $\theta \in [0, \tha)$ also. 

c) Define $f_z(\theta)=:e^{\frac{\C}{2}z\theta^2+\C\theta-\frac{\A}{2}z}$ for any real number $z\geq 0$ and for all $\theta \in \R$. We have $f'_z(\theta)=(\C z\theta+\C)e^{\frac{\C}{2}z\theta^2+\C \theta-\frac{\A}{2}z}$ and $f^{''}_z(\theta)=\C ze^{\frac{\C}{2}z\theta^2+\C\theta-\frac{\A}{2}z}+(\C z \theta+\C)^2e^{\frac{\C}{2}z\theta^2+\C\theta-\frac{\A}{2}z}>0$ for any $z\geq 0$. Therefore $f_z(\theta)$ is a strictly convex function for any fixed $z\geq 0$. Therefore we have
\[
f_z(\lambda \theta_1+(1-\lambda)\theta_2)<\lambda f_z(\theta_1)+(1-\lambda)f_z(\theta_2)
\]
for any $\lambda \in [0, 1]$ and for all $\theta_1, \theta_2 \in \R$ for each fixed $z\geq 0$. This strict inequality also holds when $z=Z$. Also observe that when $\s$ is finite and $\LL_Z(\s)=+\infty$ or when $\s=-\infty$,  for $\theta_1, \theta_2 \in (-\tha, \tha)$ we have $Ef_Z(\theta_1)<\infty$ and $Ef_Z(\theta_2)<\infty$.
When $\s$ is finite and $\LL_Z(\s)<\infty$, for all $\theta_, \theta_2 \in [-\tha, \tha]$ we have  $Ef_Z(\theta_1)<\infty$ and $Ef_Z(\theta_2)<\infty$.
We take expectation to the above inequality when $z=Z$ and obtain $Q(\lambda \theta_1+(1-\lambda)\theta_2)<\lambda_1 Q(\theta_1)+(1-\lambda)Q(\theta_2)$. This shows the strict convexity of $Q(\theta)$ that is stated in the Lemma.
\end{proof}

\begin{remark}\label{4.2} The main message of the above Lemma \ref{HH} is that the optimal solution for the problem (\ref{L22}) is always unique. Now assume $\mathcal{L}_Z(\s)<\infty$. In this case, if the optimal portfolio $x^{\star}$ for the problem (\ref{L22}) is irregular then the $q_{min}$ in (\ref{themain}) satisfy $q_{min}=-\tha$. This means that $-\tha$ is the minimizing point of $Q(\theta)$ in $[-\tha, \tha]$. As $Q(\theta)$ is a strictly convex function on $[-\tha, \tha]$ as shown in the above Lemma \ref{HH}, we conclude that $Q(\theta)$ is a strictly increasing, strictly convex function on $[-\tha, \tha]$. In comparision, when the solution for (\ref{L22}) is regular, then the corresponding $Q(\theta)$ is strictly convex but not strictly increasing on $[-\tha, \tha]$.
\end{remark}

\begin{example} \label{5.3ex} Assume the mixing distribution $Z$ in our model (\ref{one}) takes finitely many values 
$\{z_i\}_{1\le i\le  m}$ with corresponding probabilities $(p_i)_{1\le i \le m}$. Then $X$ in (\ref{one}) is a mixture of Normal
random vectors
\begin{equation}
X\sim \sum_{i=1}^mp_iN_d(\mu+\gamma z_i, z_i\Sigma).    
\end{equation}
In this case, the function $Q(\theta)$ takes the following form
\begin{equation}
Q(\theta)=\sum_{i=1}^mp_ie^{(\frac{\theta^2}{2}C-\frac{1}{2}A)z_i+\theta C}.    
\end{equation}
From part c) of the above Lemma \ref{HH} we know that the function $Q(\theta)$ is strictly convex on $(-\infty, +\infty)$. Thus the solution for the optimization problem (\ref{L22}) is unique and this unique solution is given by (\ref{themainn}) with $q_{min}=\arg min_{\theta \in (-\infty, 0)} Q(\theta)$. Now, assume  $Z=1$ with probability one instead. Then $\mathcal{L}_Z(s)=e^{-s}$ and in this case it is easy to see that
\[
Q(\theta)=e^{\frac{C}{2}(\theta^2+2\theta)-\frac{A}{2}}.
\]
The minimizing point of this function is $\theta=-1$ and so $q_{min}=-1$. Then, from (\ref{themain}), the optimal portfolio is given by
\[
x^{\star}=\frac{1}{aW_0}\Sigma^{-1}(\gamma +\mu-\1 r_f).
\]
Note here that  since we assumed $Z=1$, the $X$ in (\ref{one}) is a  Gaussian random vector and therefore one can obtain the above optimal portfolio by direct calculation as our utility function is exponential. However, our above approach seems more convenient.
\end{example}

In the next example, we look at the case of GH models.
\begin{example} \label{GIG-ab}  Lets look at the case of the model (\ref{one}) when the mixing distribution $Z$ is given by GIG models. First assume $Z\sim iG(\lambda, \frac{a^2}{2})$, the inverse Gaussian distribution. In this case we have $\lambda<0$ by the definition of inverse Gaussian random variable. From Proposition 9 of \cite{Hammerstein_EAv_2010} we have $\LL_Z(s)=(\frac{2}{a \sqrt{2s}})^{\lambda}\frac{2K_{\lambda}(a \sqrt{2s})}{\Gamma(-\lambda)}$ and therefore $Q(\theta)=e^{\C \theta}(\frac{2}{a \sqrt{\A-\C \theta^2}})^{\lambda}\frac{2K_{\lambda}(a \sqrt{\A-\C\theta^2})}{\Gamma(-\lambda)}$. In this case, the CV-L is $\s=0$ and $\tha=\sqrt{\A/\C}$. If $\gamma=0$, as discussed in the Example \ref{ex2.2},  the optimal solution for (\ref{L22}) is $x^{\star}=0$. In this case, this solution $x^{\star}=0$ is an irregular solution. Note that in this case $\A=0$ and therefore $\tha=0$. If $\gamma \neq 0$, then $\tha>0$ and in this case the $q_{min}$ in (\ref{themain}) is given by $q_{min}=\arg min_{\theta\in [-\sqrt{\A/\C}, 0)} Q(\theta)$ (due to Lemma \ref{HH}). Note that either by using the fact $\s=0$ or by using the property (A. 8) in \cite{Hammerstein_EAv_2010} directly, one can easily check that $(\frac{2}{a \sqrt{\A-\C \theta^2}})^{\lambda}\frac{2K_{\lambda}(a \sqrt{\A-\C\theta^2})}{\Gamma(-\lambda)}\rightarrow 1$ when $\theta^2\rightarrow \A/\C$. Therefore $Q(-\sqrt{\frac{\A}{\C}})=e^{-\sqrt{\A \C}}$. In this case, it is not clear if $q_{min}=-\sqrt{\frac{\A}{\C}}$ (the solution $x^{\star}$ is irregular) or $q_{min}\in (-\sqrt{\frac{\A}{\C}}, 0)$ (the solution $x^{\star}$ is regular).

Now let us look at the case $Z\sim GIG(\lambda, a, b)$ when $a>0, b>0$. Again from Proposition 9 of \cite{Hammerstein_EAv_2010} we have $\LL_Z(s)=(\frac{b}{\sqrt{b^2+2s}})^{\lambda}\frac{K_{\lambda}(a\sqrt{b^2+2s})}{K_{\lambda}(ab)}$ and $Q(\theta)=e^{\C \theta}(\frac{b}{\sqrt{b^2+\A-\C \theta^2}})^{\lambda}\frac{K_{\lambda}(a\sqrt{b^2+\A-\C \theta^2})}{K_{\lambda}(ab)}$. In this case $\s=-b^2/2$ and $\tha=\sqrt{\frac{\A+b^2}{\C}}$. One can easily check $\LL_Z(\s)=+\infty$ in this case. Therefore the unique optimal solution for (\ref{L22}) is given by (\ref{themainn}) and it is regular. 
\end{example}

\begin{corollary}\label{cor-stable} Consider the model (\ref{one}) with $\gamma=0$. In this case the distribution of $X$ is Elliptical distribution. Assume the CV-L of the mixing distribution $Z$ is $\s=0$. Then the corresponding optimization problem (\ref{L22}) has a unique solution $x^{\star}=0$. The CV-L of $Z$ is $\s=0$ if $EZ^n=+\infty$ for some positive integer $n$.
\end{corollary}
\begin{proof} Observe that in this case $\A=0$ and therefore $\tha=0$. Then $[-\tha, \tha]=\{0\}$. Therefore $q_{min}$ in (\ref{themain}) is $q_{min}=0$. As $\gamma=0$ also by assumption, we have $x^{\star}=0$ by (\ref{themain}). It is clear that this solution is unique. If $\s\neq 0$, then the Laplace transformation of $Z$ is finite in $(-\infty, |\s|)$ and this would imply that all the moments of $Z$ is finite. Therefore 
infinity of one of the  moments of $Z$ imply $\s=0$.
\end{proof}

\begin{corollary} Consider the model (\ref{one}) in high dimension, i.e., $d>0$. Assume $\mu-\1r_f$ and $\gamma$ are linearly independent.  Assume the mixing distribution $Z$ is strictly positive and $EZ<\infty$. Then the optimal solution of the problem (\ref{L22}) is explicitly given by
\[
x^{\star}=\frac{1}{aW_0}[\Sigma^{-1}\gamma +\frac{1}{EZ}\Sigma^{-1}(\mu-\1r_f)].
\]
\end{corollary}
\begin{proof} From \cite{takaki}, the optimal solution takes the following form 
\[
\bar{x}^{\star}=C[\Sigma^{-1}\gamma EZ+\Sigma^{-1}(\mu-\1 r_f)],
\]
for some constant $C>0$. We equate this solution with the solution (\ref{themain}), i.e., $\bar{x}^{\star}=x^{\star}$. The stated linear independence of $\mu-\1r_f$ and $\gamma$ then implies that
$q_{min}=-\frac{1}{EZ}$ in (\ref{themain}).
\end{proof}

\begin{example} (Stable distributions)\label{4.6} Lets look at the case of $\alpha-$stable distributions. Here we look at the  1- parametrization of the stable distributions (see Definition 1.5 of \cite{Nolan}). For other parametrizations see \cite{Nolan}. A distribution $W$ follows $\alpha-$stable distribution with parameters $\alpha\in (0, 2]$, $\beta\in [-1, 1]$, $\sigma>0$, $u\in \R$ and we write $W\sim S(\alpha, \beta, \sigma, u)$ if its characteristic function is given by 
\begin{equation}\label{cf-stable}
\phi(t)=Ee^{itW}=
\left \{  \begin{array}{ll} 
e^{-\sigma^{\alpha}|t|^{\alpha}\left [1-i\beta sign(t)tan(\frac{\pi \alpha}{2})\right ]+itu}      &  \mbox{$\alpha\neq 1$},\\
 e^{-\sigma |t|\left [1+i\beta \frac{2}{\pi}sign(t)\ln |t|\right ]+itu}     & \mbox{$\alpha=1$}.
 \end{array} 
\right. 
\end{equation}
When $\alpha=2$, a stable distribution is a Normal distribution.  When $\alpha \in (0, 2)$, $EW^2=+\infty$ for all $\beta \in [-1, 1], \sigma>0, u\in \R$. Therefore for the mixing distributions $Z=|W|, \alpha \in (0, 2), \beta \in [-1, 1], \sigma>0, u\in \R$, the corresponding CV-L is $\s=0$. Thus when $\gamma=0$ and when $Z=|W|, \alpha \in (0, 2), \beta \in [-1, 1], \sigma>0, u\in \R,$ in the model (\ref{one}),  the optimization problem (\ref{L22}) has a unique solution $x^{\star}=0$. This means that when the mixing distribution $Z$ in (\ref{one}) is equal to the absolute value of a stable distribution with $\alpha \in (0, 2)$ and when $\gamma=0$, then the optimal portfolio for an exponential utility maximizer is to invest all her/his wealth into the risk-free asset.
\end{example}

\begin{remark}
Stable distributions are infinitely divisible.
The characteristic functions (\ref{cf-stable}) of the stable laws can be obtained directly from their L\'evy-Khintchine representations. The generelized central limit theorem states that stable laws are the only non-trivial limits of normalized sums of independent
identically distributed random variables. As such they were proposed to model many empirical (heavy tails, skewness etc.) financial phenomenons in the past. The heavy tailedness of them is related with the CV-L of them being $\s=0$.  The above example \ref{4.6} shows that time-changed Brownian motion models with stable subordinators (the ones with Elliptical marginal distributions) always give the trivial portfolio, investing everything on the risk-free asset, as the optimal portfolio for an exponential utility maximizer. 
\end{remark}

As pointed out in Remark \ref{4.2}, our Lemma \ref{HH} shows that the solution for the problem (\ref{L22}) is unique. Part b) of this  Lemma shows that $\theta=0$ is not the minimizing point of the function $Q(\theta)$ under the condition that $\A \neq 0$ or $\s\neq 0$. For this unique  minimizing point  $\theta\neq 0$ of $Q(\theta)$ the first order condition (\ref{FOCQ}) can equivalently be written as
\begin{equation}
 \frac{\LL_Z'(\frac{\A}{2}-\frac{\C}{2}\theta^2)}{\LL_Z(\frac{\A}{2}-\frac{\C}{2}\theta^2)}=\frac{1}{\theta}.   
\end{equation}
A change of variable $\eta=\A/2-(\C/2)\theta^2$, which gives $\theta=-\sqrt{(\A-2\beta)/\C}$ due to 
$\theta<0$ by Lemma \ref{HH},  then gives
\begin{equation}\label{L-beta}
\frac{\LL_Z'(\beta)}{\LL_Z(\beta)}=-\sqrt{\C/(\A-2\beta)}, \; \; \s<\beta <\A/2.   
\end{equation}
From this we can conclude that if $x^{\star}$ is a regular solution for (\ref{L22}), then $\beta_{min}=:\A/2-(\C/2)q_{min}^2$ with $q_{min}$ in (\ref{themainn}) satisfies the relation (\ref{L-beta}). This observation is useful if it can be confirmed that the solution for the equation (\ref{L-beta}) is unique. Then this unique solution equals to $\beta_{min}$. Consider for example the case $Z=1$ in the model (\ref{one}). As discussed in Example  \ref{5.3ex} above, in this case we have $\LL_Z(s)=e^{-s}$. Then $\LL_Z'(\beta)/\LL_Z(\beta)=-1$ and it is clear that the equation $1=\sqrt{\C/(\A-2\beta)}$ has a unique solution $\beta=\A/2-\C/2$. This implies $q_{min}^2=1$ which then shows $q_{min}=-1$ is the minimizing point of $Q(\theta)$.

A positive random variable $Z$ is a GGC with generating pair $(\tau, \nu)$ if 
\begin{equation}\label{GGClap}
 \mathcal{L}_Z(s)=Ee^{-sZ}=e^{-\tau -\int_0^{\infty}\ln (1+\frac{s}{z})\nu(dz)}.  
\end{equation}
If $Z$ is a GGC with generating pair $(\tau, \nu)$, then $\frac{\LL_Z'(\beta)}{\LL_Z(\beta)}=-\tau-\int_0^{+\infty}\frac{1}{t-\beta}\nu(dt)$. So if the solution for (\ref{L22}) is regular,  then the $\beta_{min}$ defined above satisfy the following equation 
\[
-\tau-\int_{|\s|}^{+\infty}\frac{1}{t-\beta} \nu(dt)=-\sqrt{\C/(\A-2\beta )},
\]
where $\s$ is the CV-L of the GGC random variable $Z$.

Now consider the case of positive  $\alpha$-stable random variables $Z=S(\alpha, 1, \sigma, u), 0<\alpha < 1, u>0$. Here we took $\beta=1$ (see lemma 1.1 of \cite{Nolan}). After normalization these mixing distributions have the Laplace transformation $\LL_Z(s)=e^{-s^{\alpha}}$ (see Proposition 1 of \cite{Bondesson-ProbTheory} and also see \cite{Thomas-Simon}). Thus we have $\LL_Z'(s)/\LL_Z(s)=-s^{\alpha}\ln s$. Assume the problem (\ref{L22}) has regular solution (a necessary condition for this is $\gamma \neq 0$, see Corollary \ref{cor-stable}). Let $\beta_{min}=\A/2-(\C/2)q_{min}^2$ with $q_{min}$ in (\ref{themainn}). Then $0<\beta_{min}<\A/2$ and it satisfies the following equation due to (\ref{L-beta})
\[
\beta^{\alpha}\ln \beta=\sqrt{\C/(\A-2\beta)}.
\]
We square both sides of this equation and obtain  
\[
\A \beta^{2\alpha}(\ln \beta)^2-2\beta^{2\alpha+1}(\ln \beta)^2=\C.
\]
As discussed earlier, if this equation has a unique solution $\beta$ then it is $\beta_{min}$.

\begin{remark}
We should mention here that the formula (\ref{themain}) for the optimal portfolio for the problem (\ref{L22}) is related with the Laplace transformation of the mixing distribution $Z$ in the model (\ref{one}) only. Namely we don't need to know the probability density function of $Z$ to find the optimal portfolio for the optimization problem (\ref{L22}). The relation (\ref{L-beta}) gives a convenient approach to locate the unique optimal portfolio as discussed earlier.
\end{remark}

Next, we discuss the applications of our results in continuous time financial modelling. First we recall the Lemma 2.6 of \cite{Hammerstein_EAv_2010} here. According to this Lemma, for each model $F=N_d(\mu+\gamma z, z\Sigma)\circ G$ in (\ref{one}) there is a corresponding L\'evy process 
\begin{equation}\label{79}
 Y_t=\mu t+\gamma \tau_t+\bar{B}_{\tau_t},    
\end{equation}
with $Law(Y_1)=F$ and $Law(\tau_1)=G$ as long as $G\in \mathcal{J}$ (note that if $G\in \mathcal{J}$ then $X\in \mathcal{J}$ also from Lemma 2.5 of \cite{Hammerstein_EAv_2010}). In the model (\ref{79}), $(\bar{B}_t)_{t\geq 0}=(AB_t)_{t\geq 0}$ where $B_t$ is an $n-$dimensional  standard Brownian motion independent from $(\tau_t)_{t\geq 0}$ and $(\tau_t)_{t\geq 0}$ is a subordinator (a non-negative L\'evy process with increasing sample paths). We denote the L\'evy measure of this subordinator by $\rho$ and its Laplace transformation by
\begin{equation}\label{73}
 \mathcal{L}_{\tau_t}(s)=e^{-t\Psi(s)},   
\end{equation}
where $\Psi(s)=bs+\int_0^{\infty}(1-e^{-sy})\rho(dy)$ with a constant $b\geq 0$. As stated in Proposition 2.3 of \cite{Steven-Lalley}, the function $\Psi(s)$ is continuous, nondecreasing, nonnegative, and convex. At each time point $t>0$ we have
\begin{equation}\label{74}
 Y_t\overset{d}{=}\mu t+\gamma \tau_t+\sqrt{\tau_t}AN_d.  
\end{equation}

Now consider a  market with $n$ risky assets with price process $S_t\in \R^d$ and one risk-free asset with price process $B_t=e^{tr_f}$. Assume the log return process $Y_t=(Y_t^{(1)}, Y_t^{(2)}, \cdots, Y_t^{(d)})$, where $Y_t^{(i)}=\ln (S_t^{(i)}/S_0^{(i)})$ has the dynamics as in (\ref{79}). The log return in the risk-free asset is $\ln(B_t/B_0)=r_ft$. An exponential utility maximizer wants to determine the optimal portfolio at each time point $t$ based on the log return vector of risky assets $R\in \R^d$ with components $R^{(i)}=\ln (S_{t+\triangle}^{(i)}/S_t^{(i)})$ and the log-return of the  risk-free asset $R^{(0)}=\ln(B_{t+\triangle }/B_t)=\triangle r_f$ in the time horizon $[t, t+\triangle]$. Assume the time increment is $\triangle=1$. Then we have
\begin{equation}
 R\overset{d}{=}\mu +\gamma \tau_1+\sqrt{\tau_1}AN_d,   
\end{equation}
and from our Theorem \ref{2.5} the exponential utility maximizer's optimal portfolio at time $t$ is
\begin{equation}\label{76}
x_t^{\star}=\frac{1}{aW_0^{(t)}}\Big [\Sigma^{-1}\gamma -q_{min}^{(t)}\Sigma^{-1}(\mu-\1 r_f)\Big ],    
\end{equation}
where $W_0^{(t)}$ is his (initial) wealth that he invests on the $n+1$ assets for the period $[t, t+\triangle]$ and $q_{min}^{(t)}$ in (\ref{76}) is given by $q_{min}^{(t)}=argmin_{\theta\in \Theta}Q(\theta)$ in the corresponding domain $\theta$. Here 
\begin{equation}
 Q(\theta)=e^{C\theta-\Psi(\frac{1}{2}A-\frac{\theta^2}{2}C)}, 
\end{equation}
due to (\ref{73}).

\begin{example}\label{exam4.14} (Variance-gamma model) Consider the financial market that was discussed in the paper \cite{Madan_Dilip_B_And_Carr_Peter_P_And_Chang_Eric_C_1998}.  The stock price is given by $S(t)=S(0)e^{mt+X(t;\; \sigma_S, \; \nu_S, \; \theta_S)+\omega_St}$ in their equation (21), where $m$ is the mean-rate of return on the stock under the statistical probability measure, $\omega_S=\frac{1}{\nu_S}\ln (1-\theta_S\nu_S-\sigma_S^2\nu_S/2)$, and $X(t; \sigma_S, \nu_S, \theta_S)=b(\gamma(t; 1, \nu_S); \theta_S, \sigma_S)$
with $b(t; \theta, \sigma)=\theta t+\sigma W(t)$ being a Brownian motion with drift $\theta$ and volatility $\sigma$. Here the gamma process $\gamma(t; \mu, \nu)$ has mean rate $\mu$ and variance rate $\nu$ (note here that $\gamma(t; \mu, \nu)\sim G(\mu^2/\nu, \nu/\mu)$ with our notation for gamma random variables in this paper). The increment $g_0=:\gamma(t+1; 1, \nu_S)-\gamma(t; 1, \nu_S)\overset{d}{=}\gamma(1; 1, \nu_S)$ of this process has the Laplace transformation 
\begin{equation}
 \mathcal{L}_{g_0}(s)=(\frac{1}{1+s\nu_S})^{\frac{1}{\nu_S}},   
\end{equation}
which can be seen also from the characteristic function expression in (3) of \cite{Madan_Dilip_B_And_Carr_Peter_P_And_Chang_Eric_C_1998} for gamma processes. The risk-free asset in this financial market is given by $B_t=B_0e^{tr_f}$. The log returns of these two assets in the time horizon $[t, t+1]$ is given by
\begin{equation*}
\begin{split}
R=:&\ln (S(t+1)/S(t))\overset{d}{=}m+\omega_S+\theta_S\gamma(1; 1, \nu)+\sigma_S \sqrt{\gamma(1; 1, \nu_S)}N(0, 1),\\
 R^{0}=:&\ln (B_{t+1}/B_t)=r_f.
\end{split}
\end{equation*}

An exponential utility maximizer with utility function $u(x)=-e^{-ax}, a>0,$ and wealth $W_0^{(t)}$ at time $t$ wants to decide on the optimal proportion $x^{\star}$ on the risky asset of his wealth for the period $[t, t+1]$. His acceptable set for $x^{\star}$ is given by  
\begin{equation}\label{Exsa}
  S_a=\{x\in \R: aW_0^{(t)}\theta_S x-\frac{a^2(W_0^{(t)})^2}{2}\sigma^2_Sx^2>-\frac{1}{\nu_S}\}, 
\end{equation}
as $\s=-\frac{1}{\nu_S}$ in this case. 
The corresponding expressions for $\A, \B, \C$ in (\ref{ABC})
are given by 
\[
\A=(\frac{\theta_S}{\sigma_S})^2, \C=(\frac{m+\omega_S-r_f}{\sigma_S})^2, \B=\frac{\theta_S (m+\omega_S-r_f)}{\sigma^2_S}.
\]
Since the mixing distribution is a gamma random variable, the solution for the corresponding problem (\ref{L22}) is regular. Our Theorem \ref{2.5} shows that the optimal portfolio is given by
\begin{equation}\label{exstarn}
x^{\star}=\frac{1}{aW_0}[\frac{1}{\sigma^2_S}\theta_S-q_{min}\frac{1}{\sigma^2_S}(m+\omega_S-r_f)].
\end{equation}
where $q_{min}=argmin_{\theta \in (-\tha, \tha)} Q(\theta)$ with
$Q(\theta)$ given by (\ref{H}). Here $\tha=\sqrt{\frac{\A+2/\nu_S}{\C}}$. Next, we calculate $q_{min}$ explicitly. We have $Q(\theta)=e^{\C \theta}\LL_{g_0}(\A/2-(\C/2) \theta^2)$ and from this we get $\ln Q(\theta)=C\theta-\frac{1}{v_S}\ln (1+\frac{A}{2}v_S-\frac{C}{2}v_S\theta^2)$. The first order condition for the minimizing point of $\ln Q(\theta)$ gives $(\theta+\frac{1}{\C\nu_S})^2=\frac{1+\C\nu_S(2+\A \nu_S)}{\C^2\nu_S^2}$. This gives two solutions $\theta=-\frac{1}{\C\nu_S}\pm \frac{1}{\C\nu_S}\sqrt{1+\C\nu_S(2+\A \nu_S)}$. But since $\theta$ needs to be negative due to Lemma \ref{HH}, we take $q_{min}=\theta=-\frac{1}{\C\nu_S}-\frac{1}{\C\nu_S}\sqrt{1+\C\nu_S(2+\A \nu_S)}$. We then plug this into (\ref{themain}) and obtain
\begin{equation}
 x^{\star}=\frac{1}{aW_0^{(t)}\sigma^2_S}\Big [\theta_S+\frac{m+\omega_S-r_f}{\C\nu_S}+\frac{m+\omega_S-r_f}{\C\nu_S}\sqrt{1+\C\nu_S(2+\A\nu_S)}\Big ].
\end{equation}
Therefore in this case we have closed form expression for the optimal portfolio. We should mention that one can use similar calculations to obtain closed form expression for optimal portfolio in a market where risky assets are modelled by multi-dimensional variance gamma (MVG) model, see \cite{Madan_Dilip_B_And_Seneta_Eugene_1990} for the details of MVG models.
\end{example}

\begin{remark} Price processes with log-returns of the type (\ref{79}) has been quite popular in financial literature in the past. Such models include inverse Gaussian L\'evy processes, hyperbomic L\'evy motions, variance gamma models, and CGYM models and all of these models were shown to fit empirical data quite well, see \cite{CGMY, Geman, Schoutens_Wim_2003, Eberlein_Ernst_And_Keller_Ulrich_1995, Madan-Yor-Timechange} and the references therein for this. In fact, every semimartingale can be written as a time change of Brownian motion, see \cite{Monroe} for this. This means that all the L\'evy processes are time change of Brownian motion. In all these cases, if the time changing subordinator is independent from the Brownian motion then  our Theorem \ref{2.5} is applicable in principle. However, it is not easy to find the time-change used for general semimartinagles. Recently the paper \cite{Madan-Yor-Timechange} obtained the time change used for the CGMY model and   Meixner processes. Our results in this paper can be applied to such processes  to determine optimal portfolios for an exponential utility maximizer in a market where single or multiple risky asset dynamics follow such models. 
\end{remark}

\section{Conclusion}
The main result of this paper is Theorem \ref{2.5} where we show that the problem of locating the optimal portfolio for (\ref{L2}) when the utility function is exponential boils down to finding the minimum point of a real valued function on the real-line, improving the Theorem 1 of \cite{Birge-Bedoya} for the case of GH models and in the mean time extending it from the class of GH models to the general class of NMVM models. Our Theorem \ref{mainu} shows that optimal exponential utility in small markets converge to the overall best exponential utility in the large financial market. While optimal portfolio problems under expected utility criteria for exponential utility functions have been discussed extensively in the past financial literature, an explicit solution of the optimal portfolio as in Theorem \ref{2.5} above seems to be new. 
This is partly due to the condition we impose on the return vector $X$ of being a NMVM model. However, despite this restrictive condition on $X$, asset price dynamics with NMVM distributions in their log returns often show up in financial literature like  exponential variance gamma and exponential generalized hyperbolic L\'evy motions.

\bibliographystyle{plainnat}

\bibliography{main}

\vspace{0.2in}

\end{document}